\begin{document}

\newtheorem{lema}{Lemma}
\newtheorem{teo}{Theorem}

%\titlefigurecaption{{\large \bf \rm Journal of Analysis \& Number Theory }\\ {\it\small An International Journal}}

\title{{\bf Lecture notes on Legendre polynomials: their origin and main properties}}

\author{F\'{a}bio M.~S.~Lima \\
{\small Institute of Physics, University of Bras\'{i}lia, Campus Universitário `Darcy Ribeiro',} \\
{\small Asa Norte, 70919-970, Bras\'{i}lia-DF, Brazil} \\
{\small E-mail: fmsl@unb.br}}

%\titlerunning{Legendre polynomials}
%\authorrunning{F.~M.~S.~Lima}

%\mail{fmsl@unb.br}

\date{\today}

\maketitle

\section*{Abstract}
It is well-known that separation of variables in 2nd order partial differential equations (PDEs) for physical problems with spherical symmetry usually leads to Cauchy's differential equation for the radial coordinate and Legendre's differential equation for the polar angle $\theta$. For eigenvalues of the form $\,n\,(n+1)$, $n \ge 0\,$ being an integer, Legendre's equation admits certain polynomials $P_n(\cos{\theta})$ as solutions, which form a complete set of continuous orthogonal functions for all $\theta \in [0,\pi]$. This allows us to take the polynomials $P_n(x)$, where $x = \cos{\theta}$, as a basis for the Fourier-Legendre series expansion of any function $f(x)$ continuous by parts over $\,x \in [-1,1]$.  These lecture notes correspond to the end of my course on Mathematical Methods for Physics, when I did derive the differential equations and solutions for physical problems with spherical symmetry. For those interested in Number Theory, I have included an application of shifted Legendre polynomials in \emph{irrationality proofs}, following a method introduced by Beukers to show that $\zeta{(2)}$ and $\zeta{(3)}$ are irrational numbers.

\vspace{0.5cm}

\noindent Keywords: Separation of variables in PDEs, Spherical symmetry, Legendre polynomials, Beukers' integrals, Irrationality proofs.

% \MSC[2010] 11J72 \sep 11B83 \sep 11J82

\section{Introduction}
Let $f: D \rightarrow \mathds{R}$ be a function of three variables twice differentiable over a domain $\,D \subseteq \mathds{R}^3$. As is well-known, the \emph{Laplacian operator} is defined as $\,\nabla^2{f} := \bm{\nabla} \cdot \left(\bm{\nabla}{f} \right)$, where $\,\bm{\nabla}\,f = \frac{\partial f}{\partial x} \, \mathbf{i} +\frac{\partial f}{\partial y} \, \mathbf{j} +\frac{\partial f}{\partial z} \, \mathbf{k}\,$ is the gradient of $f$ and $\,\bm{\nabla} \cdot \vec{F} = \frac{\partial F_x}{\partial x} +\frac{\partial F_y}{\partial y} +\frac{\partial F_z}{\partial z}\,$ is the divergence of a vector field $\,\vec{F} = F_x \: \mathbf{i} +F_y \: \mathbf{j} +F_z \: \mathbf{k}$, all in Cartesian coordinates $(x,y,z)$. Therefore,
%\footnote{As usual, $\mathbf{i}$, $\mathbf{j}$, and $\mathbf{k}\,$ are the unit vectors associated to the Cartesian axes $x$, $y$, and $z$, respectively.}
\begin{eqnarray}
\nabla^2{f} = \left( \frac{\partial }{\partial x} \, , \, \frac{\partial }{\partial y} \, , \, \frac{\partial }{\partial z} \right) \cdot \left( \frac{\partial f}{\partial x} \, , \, \frac{\partial f}{\partial y} \, , \, \frac{\partial f}{\partial z} \right) \nonumber \\
= \frac{\partial^2 f}{\partial x^2} +\frac{\partial^2 f}{\partial y^2} +\frac{\partial^2 f}{\partial z^2} \, .
\end{eqnarray}
In \emph{spherical coordinates} $(r,\theta,\phi)$, $r$ is the distance to the origin $O$, chosen at the center of the spherical `source', $\theta$ is the \emph{polar} angle, and $\phi$ is the \emph{azimuthal} angle, as indicated in Fig.~\ref{fig1}, below. These coordinates are related to the Cartesian ones by $\,x = r \, \sin{\theta} \, \cos{\phi}$, $y = r \, \sin{\theta} \, \sin{\phi}$, and $\,z = r \, \cos{\theta}$.\footnote{Inversely, $\,r = \sqrt{x^2+y^2+z^2}$, $\theta = \arccos{\left(z/\sqrt{x^2+y^2+z^2}\right)}$, and $\,\phi = \arctan{(y/x)}$.}  For a one-to-one correspondence between these coordinates systems, with a \emph{unique} representation for each point, we must take $\,r \in [0, \infty)$, $\theta \in [0, \pi]$, and $\,\phi \in [0, 2 \pi)$, all angles in radians. In spherical coordinates, the Laplacian operator reads
\begin{equation}
\nabla^2{f} = \frac{1}{\,r^2} \, \frac{\partial}{\partial r} \left( r^2 \, \frac{\partial f}{\partial r} \right) +\frac{1}{\,r^2 \, \sin{\theta}} \, \frac{\partial }{\partial \theta} \left( \sin{\theta} \, \frac{\partial f}{\partial \theta} \right) +\frac{1}{\,r^2 \, \sin^2{\theta}} \, \frac{\,\partial^2 f}{\,\partial \phi^2} \: .
\label{eq:lapla}
\end{equation}
For a proof of this expression using matrix rotations, rather than the tedious change of variables usually found in textbooks, see Ref.~\cite{Lima2022}. Equation~\eqref{eq:lapla} is useful in many physical problems, as, for instance, when we want to solve Poisson's equation (1813) $\,\bm{\nabla} \cdot \vec{E} = -\,\nabla^2 V = \rho(r)/\epsilon_0\,$ in order to determine the electrostatic potential $V$ of a charged sphere, or the heat equation (Fourier, 1822) $\,\nabla^2 T = \frac{c}{k} \, \frac{\partial T}{\partial t}\,$ to find the temperature $T$ around a sphere, or the Poisson equation $\,\bm{\nabla} \cdot \vec{g} = -\nabla^2 V = -\,4 \pi \, G\, \rho\,$ for the gravitational potential $V$ around a star, or the Schr\"{o}dinger equation (1926) $\,-\,\hbar^2/(2 m)\,\nabla^2 \psi +V(r)\,\psi = E \, \psi\,$ in order to find the eigenfunctions $\psi_n(r,\theta,\phi)$ and the corresponding energy eigenvalues $E_n$ for the electron in a Hydrogen atom, etc.

\begin{figure}[hbt]
\centering
\scalebox{0.25}{\includegraphics{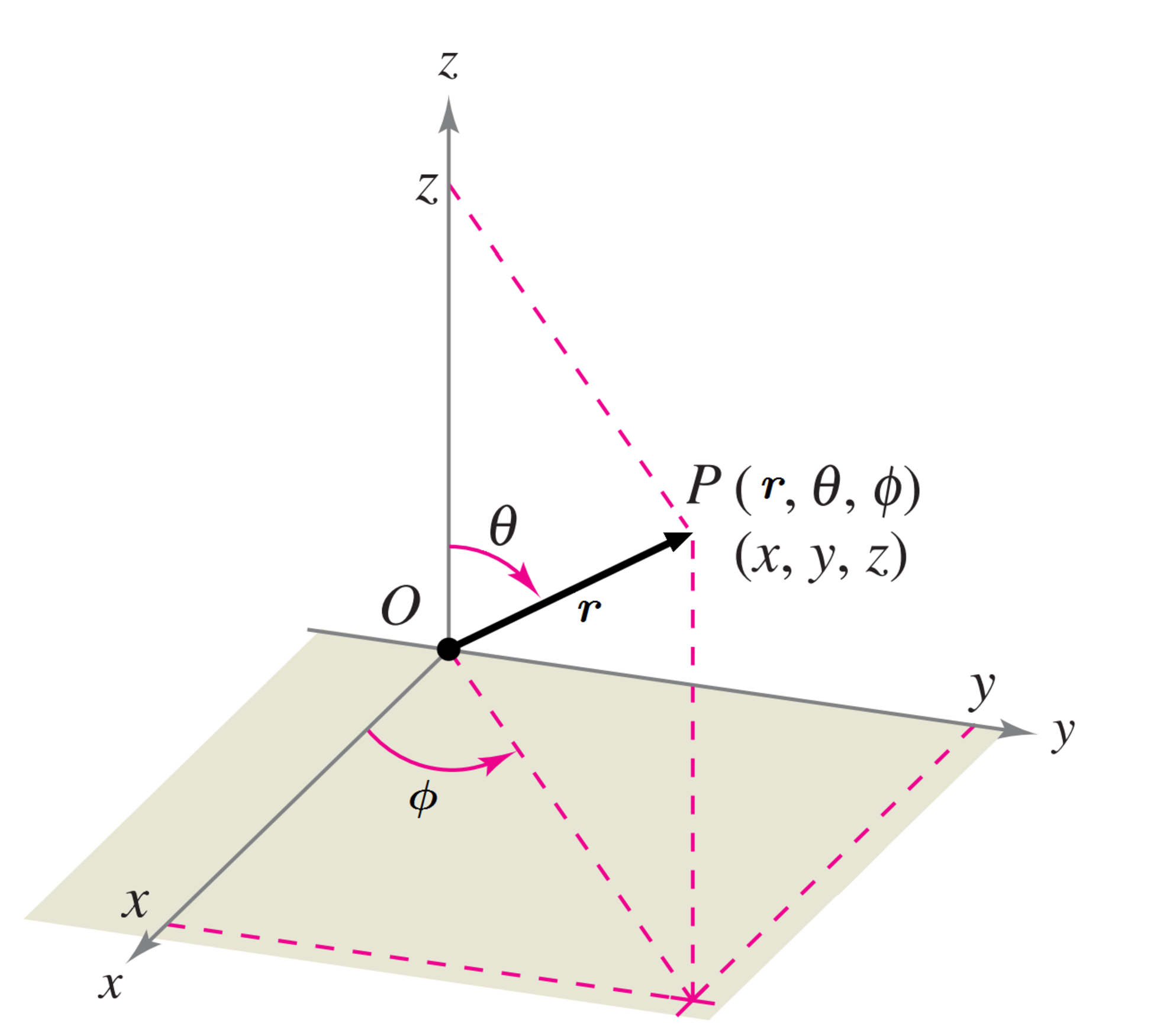}} \label{fig1}
\caption{\small{Cartesian and spherical coordinates. As usual, $r$ is the distance of (arbitrary) point $P=(x,y,z)$ to the origin $O=(0,0,0)$, $\theta$ is the polar angle (between $\mathbf{k}$ and the vector $\,\vec{r} := x\,\mathbf{i} +y\,\mathbf{j} +z\,\mathbf{k}$), and $\phi$ is the azimuthal angle (between $\mathbf{i}$ and the projection of vector $\vec{r}$ onto the $x y$-plane).}}
\end{figure}

We are interested here in Boundary-Value Problems (BVPs) with \emph{spherical symmetry} in which nothing changes when $\phi$ (only) changes, which is just where Legendre's functions and polynomials arise. In this case, the Laplacian operator in Eq.~\eqref{eq:lapla} reduces to
\begin{equation}
\nabla^2{f} = \frac{1}{r} \, \frac{\partial^2 }{\partial r^2} \left( r \, f\right) +\frac{1}{\,r^2 \, \sin{\theta}} \, \frac{\partial }{\partial \theta} \left( \sin{\theta} \, \frac{\partial f}{\partial \theta} \right) \, ,
\label{eq:PDErtheta}
\end{equation}
where we have made use of the fact that $\,\frac{1}{r^2} \, \frac{\partial}{\partial r} \left( r^2 \, \frac{\partial f}{\partial r} \right) = \frac{1}{r} \, \frac{\partial^2 }{\partial r^2} \left( r \, f\right)$. Let us take as our prototype a charged sphere of radius $a$ (made of a dielectric material), whose charge distribution varies only with $r$ (from $0$ to $a$) and $\theta$ (from $0$ to $\pi$). For $r>a$, the charge density is null, so the electrostatic potential $V(r,\theta)$ for external points will be the solution of the \emph{homogeneous} Poisson's equation, i.e.~Laplace's partial differential equation (PDE) $\,\nabla^2 V = 0$, established in 1789. Assuming that the potential is known for all points at the surface $r=a$, being given by $\,V(a,\theta) = F(\theta)$, which is our boundary condition, and that it remains \emph{finite} everywhere, one has a well-poised linear BVP, for which the method of \emph{separation of variables} is known to work. From Eq.~\eqref{eq:PDErtheta}, Laplace's equation applied to this problem becomes
\begin{equation}
\frac{1}{r} \, \frac{\partial^2 }{\partial r^2} \left( r \, V \right) +\frac{1}{\,r^2 \, \sin{\theta}} \, \frac{\partial }{\partial \theta} \left( \sin{\theta} \, \frac{\partial V}{\partial \theta} \right) = 0 \, .
\end{equation}
On assuming that the solution $\,V(r,\theta)\,$ has the form $\,R(r) \cdot W(\theta)$, one finds, after a multiplication by $\,r^2/(R\,W)$,
\begin{equation}
r\,\frac{\,(r\,R)''}{R} = -\,\frac{\,\left(\sin{\theta} \; W' \right)'}{\sin{\theta} \; W} \, ,
\end{equation}
where the left-hand side depends only on $r$ and the right-hand side depends only on $\theta$. Since $r$ and $\theta$ are independent variables, the above equality is possible \emph{only if both sides are equal to a constant}, which for convenience we choose as $\,\lambda \ge 0$. This leads to two ordinary differential equations (ODEs), one for the \emph{radial} part, namely
\begin{equation}
r\,\left(r\,R\right)'' -\,\lambda\,R = 0 \, ,
\label{eq:radial}
\end{equation}
and the other for the \emph{angular} part, namely
\begin{equation}
\left(\,\sin{\theta} \; W' \,\right)' +\lambda \: \sin{\theta} \;\, W = 0 \, .
\label{eq:LegTrig}
\end{equation}

The radial ODE is Cauchy's equation, which is easily solved by substituting $\,r = e^{\,\rho}$. This exponential substitution changes Eq.~\eqref{eq:radial} to $\,S'' +2 S' -\lambda\,S = 0$, where $S(\rho) = R(r)$. This is a $2$nd-order ODE with constant coefficients, whose solution is simply
\begin{equation}
S(\rho) = A\,e^{(-1 +k)\,\rho} +B\,e^{(-1 -k)\,\rho} \, ,
\end{equation}
where $\,k = \sqrt{1 +\lambda}$. This solution corresponds to $\,R(r) = A \: r^{-1 +k} +B\,r^{-1 -k}$. Since $\,k>1$, the first term increases without limit when $\,r \rightarrow \infty$, contrarily to the fact that the solution $V(r,\theta)$ should remain finite for all $r>0$. In order to guarantee this finiteness, we choose $\,A=0$, so the \emph{radial solution} reduces to
\begin{equation}
R(r) = \frac{B}{\:r^{\,1+\sqrt{1 +\lambda\,}}\,} \; , \; \forall \, r>a \, .
\end{equation}

For a better identification of Eq.~\eqref{eq:LegTrig}, let us make the trigonometric substitution $\,x = \cos{\theta}$. Since $\,\frac{du}{d\theta} = \frac{dx}{d\theta} \; \frac{du}{dx} = -\,\sin{\theta} ~ \frac{du}{dx} = -\,\sqrt{1 -x^2\,} ~ \frac{du}{dx}$, the angular ODE simplifies to
\begin{equation}
\left[\left(1 -x^2 \right) \, y' \, \right]' +\lambda \; y = 0 \, ,
\label{eq:origLegendre}
\end{equation}
where it is understood that $\,W(\theta) = y(x)\,$ and, since $\theta \in [0,\pi]$, then $\,x \in [-1,1]$. This is just \emph{\textbf{Legendre's differential equation}}. In textbooks, it often appears as
\begin{equation}
\left(1 -x^{2\,} \right) y'' -2 x \: y' +\lambda \; y = 0 \, ,
\label{eq:Legendre}
\end{equation}
an equivalent form in which $\,\lambda\,$ sometimes is substituted by $\,\ell \, (\ell +1)$, $\ell \ge 0$, which is a valid procedure since, so far, $\lambda$ is just a non-negative \emph{real} number (without further restrictions). Up to Eq.~\eqref{eq:Legendre}, there are no homogeneous boundary conditions available from which we could determine a discrete set of eigenvalues $\lambda$, as occurs e.g.~in the standing waves in a string of length $L$ tied at the endpoints, for which the eigenfrequencies are $\,f_n = n \, f_1$, where $\,f_1 = v/(2 L)\,$ and $\,v = \sqrt{F/\mu}\,$ is the speed of propagation of the wave (Taylor's law, 1721), but this does not mean that it should be a \emph{continuous} variable, as occurs in Fourier transforms. Our physical problem is finite and we should not treat $\lambda$ as an infinite limit of the Fourier series (which would lead to Fourier integrals). In the next section, we'll show that the application of Frobenius method to Eq.~\eqref{eq:Legendre} generates a power series solution known as \emph{Legendre's functions} and, among these functions, there are finite solutions in the form of polynomials for \emph{integer} values of $\ell$, i.e. $y(x) = P_\ell(x)$, $\ell \ge 0$, for which $P_\ell(1)=1$, the so-called \emph{Legendre polynomials}.

\section{Legendre's functions and polynomials}

Let us apply Frobenius method to Eq.~\eqref{eq:Legendre} in order to find an analytical solution in the form of a power series that is term-by-term differentiable, even at the poles $\,x = \pm\,1$:\footnote{Note that Eq.~\eqref{eq:Legendre} reduces to $\,\mp\,2 y' +\lambda\,y = 0\,$ at the poles $\,x = \pm\,1$ (which corresponds to $\,\theta = 0, \pi$). This ODE is \emph{separable} and its solution is simply $\,y(x) = e^{\,\pm\,(\lambda/2) \, x}$, which increases without limit when $x \rightarrow \pm \, \infty$, but this is not an issue because the values of $x$ which make sense all belong to the domain $[-1,1]$.}
\begin{equation}
y(x) = \sum_{k=0}^\infty{c_k \, x^k} \, .
\label{eq:serieY}
\end{equation}
From Eq.~\eqref{eq:Legendre}, one finds
\begin{equation}
\left(1 -x^2 \right)\,\sum_{k=2}^\infty{k\,(k-1)\,c_k \, x^{k-2}} - 2 x \,  \sum_{k=1}^\infty{k\,c_k \, x^{k-1}} +\lambda \, \sum_{k=0}^\infty{c_k \, x^k} = 0 \, ,
\end{equation}
which promptly simplifies to
\begin{equation}
\sum_{k=0}^\infty{[\lambda -k\,(k+1)]\,c_k \, x^k} +\sum_{k=2}^\infty{k\,(k-1)\,c_k \, x^{k-2}} = 0 \, .
\end{equation}
The substitution $\tilde{k} = k-2$ in the last sum leads to
\begin{equation}
\sum_{k=0}^\infty{\left\{ [\lambda -k\,(k+1)]\,c_k +(k+1)\, (k+2)\,c_{k+2} \right\} \, x^k} = 0 \, .
\end{equation}
Since this series must be null for all real values of $\,x \in [-1,1]$, then all its coefficients must be null, which leads to the recurrence relation
\begin{equation}
c_{k+2} = -\,\frac{\lambda -k\,(k+1)}{\,(k+1)\, (k+2)} \; c_k \: .
\label{eq:recorrencia}
\end{equation}
In this kind of recurrence, all coefficients of even order are determined by $c_0$ and those of odd order are determined by $c_1$. In particular, if $\,c_0=0\,$ and $\,c_1 \ne 0\,$ then the series in Eq.~\eqref{eq:serieY} will have only odd powers of $x$. Conversely, if $\,c_0 \ne 0\,$ and $\,c_1 = 0\,$ then the series will have only even powers of $x$. The series $\,y(x) = \sum_{k=0}^\infty{c_k \: x^k}$, with the coefficients given by Eq.~\eqref{eq:recorrencia}, are called \emph{Legendre's functions} and the basic condition for it to be a solution of Eq.~\eqref{eq:Legendre} is that the series must \emph{converge} for all $x \in [-1,1]$. However, the recurrence in Eq.~\eqref{eq:recorrencia} has the property that $\,c_{k+2} \approx c_k\,$ as $\,k \rightarrow \infty$, so, heuristically, the infinite series would behave like the geometric series $\,c_\infty \, \sum_{k=0}^\infty{x^k}$, which converges if and only if $\,|x|<1$.\footnote{Within the convergence interval $(-1,1)$, $\sum_{k=0}^\infty{x^k}\,$ converges to $\,1/(1 -x)$.} This creates a dilemma for the endpoints $\,x = \pm 1$ and the only manner to guarantee convergence for all $\,x \in [-1,1]$, including the endpoints, is to require that the series terminates at some integer $\,n \ge 0$. This is done by taking $\,c_n \ne 0\,$ and $\,c_{n+2} = 0$, which makes null all $\,c_{n +2m}$, $m = 1, 2, 3, \ldots$ The choice $\,c_{n+2} = 0\,$ demands the numerator of our recurrence to be null, which occurs \textbf{if and only if} $\,\lambda = n\,(n+1)$.\footnote{It is possible to show (but we'll not do it here) that if $\lambda$ is not a non-negative integer, then there is no bounded solution for Legendre's ODE in $(-1,1)$, except the trivial one $\,y(x)=0$.} This reduces our series solution to a \emph{polynomial function} of degree $n$, with the property that this function is even or odd depending on $n$ to be even or odd, respectively. More explicitly,
\begin{equation}
y_n(x) = c_0 +c_2\,x^2 +c_4\,x^4 +\ldots +c_n\,x^n \, , \quad n \: \mathrm{even} \, ,
\label{eq:PolsEven}
\end{equation}
or
\begin{equation}
y_n(x) = c_1\,x +c_3\,x^3 +c_5\,x^5 +\ldots +c_n\,x^n \, , \quad n \: \mathrm{odd} \, .
\label{eq:PolsOdd}
\end{equation}
From the recurrence in Eq.~\eqref{eq:recorrencia}, it is easy to see that $\,2\,(2n-1)\,c_{n-2} = -\,n\,(n-1)\,c_n$, whose repetition yields all non-null coefficients of order smaller than $n$ as a multiple of $c_n$:
\begin{equation}
c_{n-2k} = \frac{(-1)^k}{2^n\,k!\,} \: \frac{n\,(n-1)\,(n-2)\,\ldots\,(n-(2k-1))}{\,(2n-1)\,(2n-3)\,\ldots\,(2n-(2k-1))\,} \; c_n \, , \quad \forall \: k>0 \, .
\label{eq:InvRecorrencia}
\end{equation}
At this point, all textbooks adopt a `practical normalization' of these polynomials, following the convention that $\,y_n(1) = 1\,$ for all $\,n \ge 0$. As shown in Chap.~6 of Ref.~\cite{lt:Maia2000}, this is attained by choosing $\,c_n = \binom{2n}{n}/2^n = (2n-1)!!/n!\,$ and substituting it in Eq.~\eqref{eq:InvRecorrencia}, which results in\footnote{As usual, given a positive integer $n$, $n!!$ denotes its double-factorial $\,n\,(n-2)\,(n-4)\,\ldots\,1$ (it ends on $2$ if $n$ is even).}
\begin{equation}
c_{n-2k} = \frac{(-1)^k}{2^n\,k!\,} \: \frac{(2n-2k)!}{\,(n-2k)!\,(n-k)!\,} \, ,
\label{eq:cnk}
\end{equation}
which remains valid for $\,k=0$. The explicit polynomial solutions of Legendre's ODE with decreasing powers then reads
\begin{equation}
P_n(x) = \frac{1}{\,2^n} \sum_{k=0}^{\lfloor n/2 \rfloor}{ (-1)^k \; \binom{n}{k} \: \binom{2n -2k}{n} \; x^{n-2k}} \: ,
\label{eq:PnExplicito}
\end{equation}
where $\lfloor z \rfloor$ is the floor of $z$, i.e., the greatest integer $m \le z$.\footnote{For $\,z \ge 0$, this is equivalent to the integer part of $z$.}  These polynomial functions $\,P_n(x)$, which are the unique polynomial solutions of Legendre's differential equation [Eq.~\eqref{eq:origLegendre} or Eq.~\eqref{eq:Legendre}] with $\,y(1) = 1$, are just the \emph{Legendre Polynomials}, named after A.-M.~Legendre, who discovered them in 1782-1783~\cite{Legendre1782}. The first few of them are $\,P_0(x)=1$, $P_1(x)=x$, $P_2(x)=\frac12 \, (3 x^2 -1)$, $P_3(x)=\frac12 \, (5 x^3 -3x)$, $P_4(x) = \frac18\,(35 x^4 -30 x^2 +3)$, and $\,P_5(x) = \frac18 \, (63 x^5 -70 x^3 +15 x)$. These polynomials are plotted in Fig.~\ref{fig2}, below. These few examples suggest that a multiplication of $P_n(x)$ by a suitable power of $2$ would make integer all its non-null coefficients, a relevant feature for applications in Number Theory. In fact, this is possible for all $P_n(x)$ and the correct power is simply the number of factors $2$ in the prime factorization of $n!$, as follows from the choice $\,c_n = (2n-1)!!/n!\,$.\footnote{Legendre himself found a formula for this exponent and, more generally, he found that the exponent of the largest power of a prime $p$ that divides $n!$ is just the $p$-adic valuation of $n!$: $\,\nu_p(n!) = \sum_{k=1}^{\infty} \left\lfloor n/p^k \right\rfloor$. Note that this series has only finitely many nonzero terms, as for every $k$ large enough, $p^k > n$ will lead to $\left\lfloor n/p^k \right\rfloor = 0$. This reduces the infinite sum above to $\,\nu_p(n!) = \sum_{k=1}^{L} \left\lfloor n/p^k \right\rfloor$, where $\,L := \lfloor \log_{p} n \rfloor$.}

\begin{figure}[hbt]
\centering
\scalebox{1.0}{\includegraphics{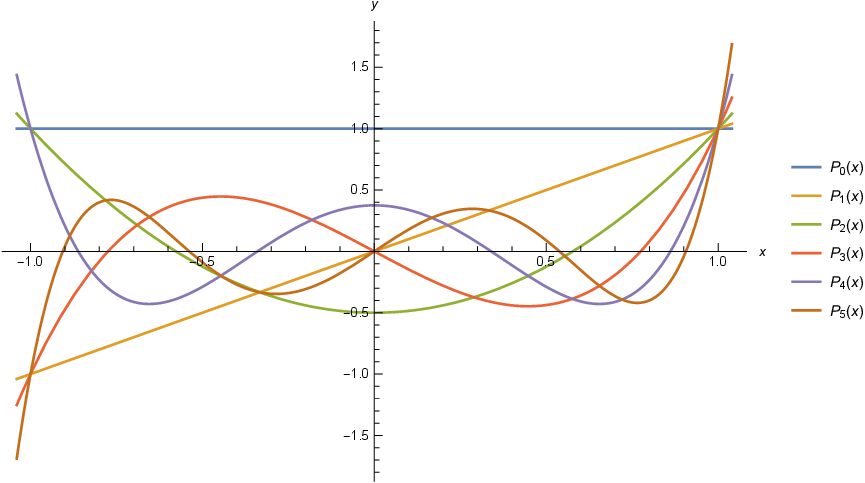}} \label{fig2}
\caption{\small{The first few Legendre polynomials $P_n(x)$, as given by Eq.~\eqref{eq:PnExplicito}. Note that they all obey the `practical normalization' $\,P_n(1)=1$.}}
\end{figure}

Legendre polynomials also have the following useful representation in terms of the $n$-th derivative of $\,(x^2 -1)^n$, first derived by Rodrigues in 1816.\footnote{\label{ft:2F1}$\,P_n(x)$ can also be written in the form ${}_2F_1{[-n,n+1;1;(1-x)/2]}$, as shown in Ref.~\cite[Sec.~5.5]{lt:Capelas}.}

\begin{lema}[Rodrigues' formula for $P_n(x)\,$]  For all non-negative integers $n$, the following representation for Legendre polynomials holds:
\begin{equation}
P_n(x) = \frac{1}{\,2^n \, n!} \; \frac{d^n}{\,dx^n} \left[ (x^2 -1)^n \right] \, .
\end{equation}
\end{lema}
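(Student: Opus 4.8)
The plan is to verify that the right-hand side of Rodrigues' formula is a polynomial of degree $n$, that it solves Legendre's equation \eqref{eq:Legendre} with $\lambda = n(n+1)$, and that it takes the value $1$ at $x=1$; by the uniqueness of the polynomial solution established in the preceding discussion, this forces it to equal $P_n(x)$. First I would set $u(x) := (x^2-1)^n$ and observe that $u$ satisfies the first-order identity $(x^2-1)\,u' = 2n\,x\,u$, obtained by direct differentiation. The idea is then to differentiate this identity $n+1$ times using the Leibniz rule, which couples $u^{(n+2)}$, $u^{(n+1)}$, and $u^{(n)}$ in exactly the pattern of Legendre's equation.

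Carrying this out: differentiating $(x^2-1)u' = 2nx\,u$ a total of $n+1$ times, the Leibniz rule applied to the left side gives $(x^2-1)u^{(n+2)} + (n+1)(2x)u^{(n+1)} + \binom{n+1}{2}(2)u^{(n)}$, since the third and higher derivatives of $x^2-1$ vanish; applied to the right side it gives $2n\big(x\,u^{(n+1)} + (n+1)u^{(n)}\big)$. Setting $w := u^{(n)}$ and collecting terms, the $u^{(n+1)}$ coefficients combine to $(2n+2-2n)x\,w' = 2x\,w'$ on the appropriate side, and the $u^{(n)}$ coefficients combine to $\big(n(n+1) - 2n(n+1)\big)w$; rearranging yields $(1-x^2)w'' - 2x\,w' + n(n+1)\,w = 0$, which is precisely \eqref{eq:Legendre} with $\lambda = n(n+1)$. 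Next I would note that $w = \frac{d^n}{dx^n}(x^2-1)^n$ is a polynomial of degree exactly $n$ (the $n$-th derivative of a degree-$2n$ polynomial), so $\frac{1}{2^n n!}w$ is a genuine polynomial solution, and by the uniqueness argument it must be a scalar multiple of $P_n(x)$.

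It then remains to pin down the normalization, i.e.\ to show $\frac{1}{2^n n!}\,w(1) = 1$. For this I would write $(x^2-1)^n = (x-1)^n(x+1)^n$ and apply the Leibniz rule to the $n$-th derivative: $\frac{d^n}{dx^n}\big[(x-1)^n(x+1)^n\big] = \sum_{k=0}^{n}\binom{n}{k}\frac{d^k}{dx^k}(x-1)^n\cdot\frac{d^{n-k}}{dx^{n-k}}(x+1)^n$. At $x=1$, every derivative of $(x-1)^n$ of order less than $n$ vanishes, so only the $k=n$ term survives, giving $\binom{n}{n}\cdot n!\cdot(x+1)^n\big|_{x=1} = n!\cdot 2^n$. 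Hence $\frac{1}{2^n n!}\,w(1) = 1$, matching the practical normalization $P_n(1)=1$, and the identification is complete.

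The main obstacle is purely bookkeeping: carefully applying the Leibniz rule $n+1$ times to $(x^2-1)u'=2nx\,u$ and tracking the binomial coefficients so that the $u^{(n)}$ and $u^{(n+1)}$ terms combine into exactly the coefficients $-2x$ and $n(n+1)$ of Legendre's equation. One should double-check that the derivatives of $x^2-1$ beyond the second are zero (so the Leibniz sum on the left truncates after three terms) and that the arithmetic $2(n+1)-2n = 2$ and $\binom{n+1}{2}\cdot 2 - 2n(n+1) = n(n+1) - 2n(n+1) = -n(n+1)$ is done with the correct signs after moving everything to one side. No deep idea is needed beyond recognizing that the first-order ODE for $(x^2-1)^n$ is the right object to differentiate.
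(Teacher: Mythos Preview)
Your argument is correct and follows essentially the same route as the paper: both start from the first-order identity $(x^2-1)\,u' = 2nx\,u$ for $u=(x^2-1)^n$, differentiate it $n+1$ times via the Leibniz rule, and read off Legendre's equation for $u^{(n)}$. The only minor difference is in fixing the normalization: the paper matches the leading coefficient $\tfrac{(2n)!}{n!}\,x^n$ against $c_n=\binom{2n}{n}/2^n$, whereas you evaluate directly at $x=1$ using the factorization $(x-1)^n(x+1)^n$---both are equally valid ways to verify the practical normalization $P_n(1)=1$.
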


\begin{proof}  For $\,n=0$, since $\,0! = 1\,$ and the $0$-th derivative of a function is defined as the function itself, Rodrigues' formula yields $\,P_0(x) = 1$, which agrees to the practical normalization. For $n>0$, we proceed as follows. Legendre's ODE with explicit integer eigenvalues reads
\begin{equation}
\frac{d}{dx} \left[\left(1-x^2 \right) \, y' \, \right] +n\,(n+1) \, y = 0 \, ,
\label{eq:LegODE}
\end{equation}
Now, note that, for any integer $n>0$,
\begin{equation}
\left(x^2 -1\right) \, \frac{d}{dx} \left(x^2 -1\right)^n = 2 n x \, \left(x^2 -1\right)^n \, .
\end{equation}
By passing the product in the right-hand side to the left and calculating the $(n+1)$-th derivative, one finds
\begin{eqnarray}
\frac{d^{n+1}}{dx^{n+1}} \, \left[ (x^2-1) \, \frac{d}{dx} (x^2-1)^n -2 n x (x^2-1)^n\right] \nonumber \\
= n\,(n+1)\,\frac{d^n}{dx^n}(x^2-1)^n +2\,(n+1)\,x\,\frac{d^{n+1}}{dx^{n+1}} (x^2-1)^n +(x^2-1) \, \frac{d^{n+2}}{dx^{n+2}} (x^2-1)^n \nonumber \\
-2 n\,(n+1) \, \frac{d^n}{dx^n}(x^2-1)^n -2 n\,x\,\frac{d^{n+1}}{dx^{n+1}} (x^2-1)^n \nonumber \\
= -\,n\,(n+1) \, \frac{d^n}{dx^n} (x^2-1)^n +2 x\,\frac{d^{n+1}}{dx^{n+1}} (x^2-1)^n +(x^2-1) \, \frac{d^{n+2}}{dx^{n+2}} (x^2-1)^n \nonumber \\
= -\,\left\{\frac{d}{dx} \left[(1-x^2) \; \frac{d}{dx} \left(\frac{d^n}{dx^n} (x^2-1)^n \right) \right] +n\,(n+1) \: \frac{d^n}{dx^n} (x^2-1)^n \right\} = 0 \, .
\end{eqnarray}
This shows that the function $\,\frac{d^n}{dx^n} (x^2 -1)^n\,$ satisfies Legendre's differential equation, Eq.~\eqref{eq:LegODE}. The practical normalization is obtained from the coefficient of the leading term (of degree $n$), i.e. $\frac{d^n}{dx^n} \left(x^{2n}\right) = 2 n \, (2n-1) \, (2n-2) \, \ldots \, (n+1) \, x^n = \frac{(2n)!}{n!} \, x^n$, which shows that one has to multiply that function by $\,1/\left(2^n \, n!\right)$.
\end{proof}

It promptly follows from Rodrigues' formula that $\,P_n(-\,x) = (-1)^n \, P_n(x)\,$ for all integers $\,n \ge 0$. In particular, $\,P_n(-1) = (-1)^n \, P_n(1) = (-1)^n$, since $\,P_n(1) = 1$. Being both $P_n(1)$ and $P_n(-1)$ non-null, it can be shown (see the Appendix) that the $n$ roots of $\,P_n(x) = 0\,$ are distinct real numbers belonging to the open interval $(-1,1)$. These roots are a crucial part of the \emph{Gauss-Legendre quadrature}, whose modern formulation using orthogonal polynomials was developed by Jacobi in 1826.\footnote{For a $n$-points quadrature, it reads $\,\int_{-1}^{+1}{\,f(x) ~ dx} \approx \sum_{i=1}^n{w_i \: f(\xi_i)}$, where $\,w_i = 2\,/\left[\left(1 -\xi_i^2 \right) \, \left(P'_n(\xi_i)\right)^2 \right]\,$ are the weights, $P_n(x)$ being the Legendre polynomial taken within the practical normalization $\,P_n(1) = 1$, and $\,\xi_i\,$ is the $i$-th Gaussian node, which is just the $i$-th root of $P_n(x)$~\cite[p.~887]{Abramow}. Due to the parity of all $P_n(x)$, $x=0$ is a root of $P_n(x)$ \emph{if and only if} $n$ is odd. The parity of $P_n(x)$ makes the roots to be symmetric with respect to $x=0$, so we need to compute only those roots in the interval $(0,1)$. Incidentally, each one of the $n+1$ subintervals of $(-1,1)$ formed by the $n$ roots of $P_{n-1}(x)$ contain exactly one root of $P_{n}(x)$, an important feature for numerical computations. For a simple proof of this feature, see the Appendix. This quadrature rule is very accurate when $f(x)$ is well-approximated by polynomials on $[-1,1]$, being \emph{exact} for polynomials of degree $2n-1$ or less.}  At the origin $x=0$, the only surviving coefficient in Eqs.~\eqref{eq:PolsEven} and \eqref{eq:PolsOdd} is $c_0$, so we have $\,P_{2 m +1}(0) = 0\,$ and, from Eq.~\eqref{eq:PnExplicito}, $P_{2 m}(0) = c_0 = (-1)^m \, \binom{2m}{m}/2^{2m} = (-1)^m \, (2 m -1)!!/(2 m)!!$.

Additionally, Rodrigues' formula can be taken into account to prove other representations for $P_n(x)$, for instance
\begin{equation}
P_n(x) = \sum_{k=0}^{n}{ \binom{n}{k} \: \binom{n+k}{k} \; \left( \frac{\,x-1}{2}\right)^k} \, ,
\label{eq:Pn2explicito}
\end{equation}
which will be important in the next section, in the derivation of an explicit representation for shifted Legendre polynomials.

Interestingly, Legendre polynomials form a set of \emph{orthogonal} continuous functions over $\,[-1,1]$. We give below a formal proof of this property.

\begin{lema}[Orthogonality of Legendre polynomials]
\label{lema:ortogonal}  Legendre polynomials $P_n(x)$ of distinct degrees are orthogonal over $[-1,1]$, with weighting function $\,w(x) = 1$.
\end{lema}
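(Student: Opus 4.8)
The plan is to prove orthogonality directly from Legendre's differential equation in its self-adjoint (Sturm--Liouville) form, Eq.~\eqref{eq:LegODE}, exploiting the fact that the operator $\frac{d}{dx}\!\left[(1-x^2)\,\frac{d}{dx}\right]$ is symmetric on $[-1,1]$ because the coefficient $1-x^2$ vanishes at both endpoints $x=\pm 1$. Concretely, take two distinct non-negative integers $m\ne n$ and write the two equations satisfied by $P_m$ and $P_n$:
\begin{equation}
\bigl[(1-x^2)\,P_m'\bigr]' + m(m+1)\,P_m = 0, \qquad \bigl[(1-x^2)\,P_n'\bigr]' + n(n+1)\,P_n = 0.
\end{equation}

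Next I would multiply the first equation by $P_n$ and the second by $P_m$, subtract, and integrate over $[-1,1]$. The key algebraic observation is that
\begin{equation}
P_n\,\bigl[(1-x^2)P_m'\bigr]' - P_m\,\bigl[(1-x^2)P_n'\bigr]' = \frac{d}{dx}\!\left[(1-x^2)\bigl(P_n P_m' - P_m P_n'\bigr)\right],
\end{equation}
so the integral of the left-hand side collapses, by the fundamental theorem of calculus, to the boundary term $\bigl[(1-x^2)(P_n P_m' - P_m P_n')\bigr]_{-1}^{1}$, which is zero since $1-x^2 = 0$ at $x = \pm 1$. What remains is
\begin{equation}
\bigl[m(m+1) - n(n+1)\bigr]\int_{-1}^{1} P_m(x)\,P_n(x)\,dx = 0.
\end{equation}
Since $m\ne n$ and both are non-negative integers, $m(m+1) - n(n+1) = (m-n)(m+n+1) \ne 0$, so we may divide it out and conclude $\int_{-1}^{1} P_m(x)\,P_n(x)\,dx = 0$, which is exactly orthogonality with weight $w(x)=1$.

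There is no real obstacle here; the only point that deserves a sentence of care is justifying that the boundary term genuinely vanishes, i.e.\ that $P_n P_m' - P_m P_n'$ stays bounded (indeed it is a polynomial) as $x\to\pm 1$, so that multiplying by $(1-x^2)\to 0$ kills it — this is immediate because the $P_k$ are polynomials, hence smooth on all of $\mathbb{R}$, and in particular at the endpoints. I would also remark in passing that the argument shows more: it reproves that the admissible eigenvalues are of the form $n(n+1)$ and that eigenfunctions belonging to different eigenvalues are automatically orthogonal, which is the abstract Sturm--Liouville mechanism specialized to this equation. A brief concluding remark that the case $m=n$ (the normalization $\int_{-1}^1 P_n^2\,dx = 2/(2n+1)$) requires a separate computation — e.g.\ via Rodrigues' formula and repeated integration by parts — but is not needed for the orthogonality statement, would round things off.
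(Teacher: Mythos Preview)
Your proof is correct and follows essentially the same route as the paper: write the self-adjoint Legendre ODE for $P_m$ and $P_n$, cross-multiply and subtract to obtain a total derivative, integrate over $[-1,1]$ so the boundary term vanishes thanks to the factor $1-x^2$, and conclude from $(m-n)(m+n+1)\ne 0$. Your added remarks on the boundedness of the polynomial factor and on the separate normalization computation are fine but not needed for the lemma.
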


\begin{proof}
It is well-established in linear algebra that the orthogonality of two continuous functions $f_m(x)$ and $f_n(x)$ over a domain $[a,b]$ is equivalent to the nullity of the inner product $\,\langle f_m | f_n\rangle := \int_a^b{ f_m(x) \: f_n(x) \: w(x) ~ dx}$. On taking $\,w(x)=1$, all we have to do is to show that
\begin{equation}
\int_{-1}^{+1}{ P_m(x) \cdot P_n(x) ~ dx} = 0 \, , \quad \forall \; m \ne n \, .
\end{equation}
Since all $P_n(x)$ satisfy Legendre's ODE, we have
\begin{equation}
\left[\left(1 -x^2 \right) \, P'_n(x) \, \right]' +n\,(n+1) \; P_n = 0
\label{eq:Pn1}
\end{equation}
and
\begin{equation}
\left[\left(1 -x^2 \right) \, P'_m(x) \, \right]' +m\,(m+1) \; P_m = 0 \, .
\label{eq:Pm1}
\end{equation}
Now, multiply both sides of Eq.~\eqref{eq:Pn1} by $P_m(x)$ and those of Eq.~\eqref{eq:Pm1} by $P_n(x)$. A member-to-member subtraction of the resulting expressions will result in
\begin{equation}
\left[\left(1 -x^2 \right) \, \left(P'_m\,P_n -P_m\,P'_n\right) \right]' +(m-n)\,(m+n+1) \; P_m\,P_n = 0 \, .
\label{eq:Pnm}
\end{equation}
Finally, the integration of both sides over the domain $[-1,1]$ will yield
\begin{equation}
\int_{-1}^{+1}{\frac{d}{dx} \left[\left(1 -x^2 \right) \, \left(P'_m\,P_n -P_m\,P'_n\right) \right] ~ dx} +(m-n)\,(m+n+1) \, \int_{-1}^{+1}{P_m\,P_n ~ dx} = 0 \, .
\label{eq:intPnm}
\end{equation}
From the fundamental theorem of calculus, the first integral is always null, so
\begin{equation}
(m-n)\,(m+n+1) \, \int_{-1}^{+1}{P_m\,P_n ~ dx} = 0 \, .
\end{equation}
Clearly, for all non-negative values of $\,m \ne n\,$ the last integral has to be null.
\end{proof}

The orthogonality of $\,P_n(x)\,$ is also the basic feature that allows its derivation by applying the Gram-Schmidt procedure to orthogonalize the powers $\,\left\{1, x, x^2, \ldots \right\}$, as usually done in Linear Algebra textbooks (see, e.g., Sec.~8.3 of Ref.~\cite{Strang5ed}).  In fact, Legendre polynomials can be renormalized in order to form a \emph{complete set of orthonormal polynomials} $\,\hat{P}_n(x)\,$ on the interval $[-1,1]$, i.e.~an \emph{orthonormal basis}.

\begin{lema}[Orthonormality of Legendre polynomials]
\label{lema:ortonormal}  The renormalized Legendre polynomials $\: \hat{P}_n(x) := \sqrt{n +\frac12\,} \; \, P_n(x)\:$ are orthonormal on the interval $\,[-1,1]$.
\end{lema}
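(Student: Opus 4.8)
The plan is to reduce the claim to one scalar computation, the value of the normalization integral $\int_{-1}^{1} P_n(x)^2\,dx$. For $m\neq n$ the off-diagonal inner products are already settled: $\langle \hat P_m\,|\,\hat P_n\rangle = \sqrt{(m+\tfrac12)(n+\tfrac12)}\,\int_{-1}^{1} P_m(x)\,P_n(x)\,dx = 0$ by Lemma~\ref{lema:ortogonal}. So it suffices to prove that $\int_{-1}^{1} P_n(x)^2\,dx = \dfrac{2}{2n+1} = \dfrac{1}{\,n+\frac12\,}$, since then $\langle \hat P_n\,|\,\hat P_n\rangle = \bigl(n+\tfrac12\bigr)\cdot\dfrac{2}{2n+1} = 1$, which together with the vanishing off-diagonal products is exactly orthonormality.

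To evaluate the diagonal integral I would use Rodrigues' formula (the first Lemma above). Writing $u(x):=(x^2-1)^n$, Rodrigues' formula gives $\int_{-1}^{1} P_n^2\,dx = \dfrac{1}{(2^n n!)^2}\int_{-1}^{1} u^{(n)}(x)\,u^{(n)}(x)\,dx$. Then I would integrate by parts $n$ times, at each stage differentiating one factor and integrating the other. Every boundary term vanishes, because $u(x)=(x-1)^n(x+1)^n$ has a zero of order $n$ at each endpoint, so $u^{(j)}(\pm1)=0$ for all $j<n$; this is precisely where the power $(x^2-1)^n$, rather than $x^2-1$, earns its keep. After the $n$ integrations by parts one is left with $\dfrac{(-1)^n}{(2^n n!)^2}\int_{-1}^{1} u(x)\,u^{(2n)}(x)\,dx$. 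Since $u$ is a monic polynomial of degree $2n$, its $2n$-th derivative is the constant $(2n)!$, and $(-1)^n(x^2-1)^n=(1-x^2)^n$, so this equals $\dfrac{(2n)!}{(2^n n!)^2}\int_{-1}^{1} (1-x^2)^n\,dx$.

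The last ingredient is the Wallis-type integral $\int_{-1}^{1}(1-x^2)^n\,dx$. The substitution $x=\sin\theta$ turns it into $2\int_{0}^{\pi/2}\cos^{2n+1}\theta\,d\theta = 2\,\dfrac{(2n)!!}{(2n+1)!!} = \dfrac{2\,(2^n n!)^2}{(2n+1)!}$, using $(2n)!!=2^n n!$ and $(2n+1)!!=(2n+1)!/(2^n n!)$. Substituting back, $\int_{-1}^{1} P_n^2\,dx = \dfrac{(2n)!}{(2^n n!)^2}\cdot\dfrac{2\,(2^n n!)^2}{(2n+1)!} = \dfrac{2}{2n+1}$, which is what was needed.

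I expect the only real obstacle to be bookkeeping: correctly tracking the sign $(-1)^n$ and the surviving constant factor through the $n$ successive integrations by parts, and stating once and for all why each boundary term is zero (the order-$n$ vanishing of $(x^2-1)^n$ at $x=\pm1$). The Wallis integral and the double-factorial simplifications are entirely standard. One could instead derive $\int_{-1}^1 P_n^2\,dx$ from Bonnet's three-term recurrence, but that recurrence has not been established in these notes, so the Rodrigues route is the most self-contained.
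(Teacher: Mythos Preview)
Your proof is correct and rests on the same backbone as the paper's: reduce to the diagonal integral $\int_{-1}^1 P_n^2\,dx$, invoke Rodrigues, move $n$ derivatives across by integration by parts (boundary terms dying because $(x^2-1)^n$ vanishes to order $n$ at $\pm 1$), and then evaluate the resulting beta-type integral $\int(1-x^2)^n\,dx$. The differences are in execution, not in idea. You apply Rodrigues symmetrically to both copies of $P_n$ and shuffle all $n$ derivatives onto one factor, ending with the constant $u^{(2n)}=(2n)!$; the paper instead uses orthogonality to strip one $P_n$ down to its leading term $a_n x^n$ and applies Rodrigues only to the other copy, arriving at the same integral with the same constant. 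For the final step you invoke the Wallis integral via $x=\sin\theta$, while the paper derives the recurrence $I_n=-\frac{2n}{2n+1}I_{n-1}$ directly by parts. Your route is slightly more self-contained (it does not lean on the leading-coefficient formula $a_n=(2n)!/(2^n(n!)^2)$), at the small cost of importing the Wallis formula; the paper's recurrence keeps everything algebraic. Either way the bookkeeping you flag---the sign $(-1)^n$ and the vanishing boundary terms---is the only place requiring care, and you have identified it correctly.
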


\begin{proof}  According to Lemma~\ref{lema:ortogonal}, the polynomials $P_n(x)$ are \emph{orthogonal} over the interval $[-1,1]$. Then, all that rests is to prove that
\begin{equation}
\langle P_n | P_n\rangle = \int_{-1}^{+1}{ P_n^{\,2}(x) ~ dx} = \frac{2}{\,2 n +1} \, , \quad \forall \: n \ge 0 \, .
\label{eq:PnPn}
\end{equation}
For this, we begin noting that $P_n^{\,2}(x)$ is a polynomial of degree $2n$ whose terms all have even exponents, so that it defines an \emph{even function}. Therefore,
\begin{eqnarray}
\int_{-1}^{+1}{ P_n^{\,2}(x) ~ dx} = 2 \, \int_0^1{ P_n(x) \, P_{n}(x) ~ dx} \nonumber \\
= 2\,(-1)^n \, \frac{a_n}{2^n} \, \int_0^1{(x^2 -1)^n ~ dx} \nonumber \\
= 2\,(-1)^n \, \frac{a_n}{2^n} \cdot I_n \: ,
\label{eq:aux1}
\end{eqnarray}
where  $\,a_n = (2n)!/(2^n \, (n!)^2)\,$ and the integral $\,I_n := \int_0^1{(x^2 -1)^n ~ dx}\,$ is such that
\begin{eqnarray}
I_n = \int_0^1{x\,(x^2-1)^n ~dx} -2 n \, \int_0^1{x^2\,(x^2-1)^{n-1} ~ dx} \nonumber \\
= -\,2 n \, \int_0^1{\left[(x^2 -1)^n +(x^2 -1)^{n-1} \right] ~ dx} \nonumber \\
= -\,2 n \, I_n -2 n \, I_{n-1} \, .
\end{eqnarray}
We then have the recurrence relation
\begin{equation}
I_n = -\,\frac{2 n}{\,2 n +1\,} \: I_{n-1} \, .
\end{equation}
Using this and $\,I_0 = \int_0^1{1 ~ dx} = 1$, one easily finds that
\begin{eqnarray}
I_n = (-1)^{n} \, \frac{2 \cdot 4 \cdot 6 \cdots (2n)}{\,3 \cdot 5 \cdot 7 \cdots (2n +1)\,} \nonumber \\
= (-1)^{n} \, \frac{\,[2 \cdot 4 \cdot 6 \cdots(2n)]^{\,2}}{(2n +1)!} \nonumber \\
= (-1)^{n} \: \frac{\,2^{2 n}\,(n!)^2}{(2 n +1)!} \: .
\end{eqnarray}
On substituting this in Eq.~\eqref{eq:aux1}, one finally finds that
\begin{eqnarray}
\int_{-1}^1{P_n^{\,2}(x) ~ dx} = \frac{(-1)^{n}}{2^{n-1}} \cdot \frac{(2n)!}{2^{n}\,(n!)^2} \cdot (-1)^{n} \, \frac{2^{2n} \, (n!)^2}{(2n+1)} \nonumber \\
= \frac{2}{\,2n +1\,} \, .
\end{eqnarray}
\end{proof}

Our Lemmas~\ref{lema:ortogonal} and \ref{lema:ortonormal} can now be expressed in a single \emph{orthonormality relation}.
\begin{teo}[Orthonormality relation for $P_n(x)\,$]
\label{teo:ortoPn}  For any non-negative integers $m$ and $n$,
\begin{equation}
\langle P_m | P_n \rangle = \int_{-1}^{+1}{ P_m(x) \cdot P_n(x) ~ dx} = \frac{2}{\,2 n +1} \; \delta_{m n} \: ,
\label{eq:basePn}
\end{equation}
where $\,\delta_{mn}\,$ is the Kronecker's delta ($1$ if $m=n$, $0$ otherwise).
\end{teo}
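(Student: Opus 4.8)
The plan is to obtain Theorem~\ref{teo:ortoPn} as an immediate corollary of Lemmas~\ref{lema:ortogonal} and \ref{lema:ortonormal}, by splitting into the two cases $m \neq n$ and $m = n$ and then packaging the outcome with a Kronecker delta.

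First I would dispose of the off-diagonal case $m \neq n$. Here the right-hand side of Eq.~\eqref{eq:basePn} vanishes because $\delta_{mn} = 0$, so the only thing to verify is $\int_{-1}^{+1} P_m(x)\,P_n(x)\, dx = 0$. But that is exactly the assertion of Lemma~\ref{lema:ortogonal}, whose proof I may invoke verbatim: write Legendre's ODE in self-adjoint form for $P_n$ and for $P_m$, cross-multiply by $P_m$ and $P_n$ respectively, subtract, and integrate over $[-1,1]$; the total-derivative term integrates to a boundary term that dies because of the factor $(1-x^2)$ at $x=\pm 1$, leaving $(m-n)(m+n+1)\int_{-1}^{+1} P_m P_n\, dx = 0$, and the prefactor is nonzero for non-negative $m\neq n$.

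Next I would handle the diagonal case $m = n$. Now $\delta_{mn}=1$ and the claim collapses to Eq.~\eqref{eq:PnPn}, namely $\int_{-1}^{+1} P_n^{\,2}(x)\, dx = \frac{2}{2n+1}$, which was already established inside the proof of Lemma~\ref{lema:ortonormal} using Rodrigues' formula together with the recurrence $I_n = -\tfrac{2n}{2n+1}\,I_{n-1}$ for $I_n = \int_0^1 (x^2-1)^n\, dx$ and $I_0 = 1$. Since $m=n$, the constants $\frac{2}{2m+1}$ and $\frac{2}{2n+1}$ coincide, so the coefficient appearing in Eq.~\eqref{eq:basePn} is unambiguous.

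Finally I would merge the two cases: for arbitrary non-negative integers $m$ and $n$ the integral $\int_{-1}^{+1} P_m(x)\,P_n(x)\, dx$ equals $0$ when $m \neq n$ and $\frac{2}{2n+1}$ when $m = n$, which is precisely $\frac{2}{2n+1}\,\delta_{mn}$, as claimed. I do not anticipate any genuine obstacle here, since the theorem is essentially a bookkeeping restatement of the two preceding lemmas; the only subtlety worth flagging explicitly is that the seemingly asymmetric normalization constant $\frac{2}{2n+1}$ (rather than a quantity symmetric in $m$ and $n$) is legitimate precisely because it is only ever evaluated on the diagonal $m=n$.
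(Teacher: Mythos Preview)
Your proposal is correct and matches the paper's own treatment: the paper does not give a separate proof of Theorem~\ref{teo:ortoPn} but simply states that Lemmas~\ref{lema:ortogonal} and \ref{lema:ortonormal} ``can now be expressed in a single orthonormality relation,'' which is exactly the case split you describe. Your additional remark about the asymmetry of the factor $\frac{2}{2n+1}$ being harmless on the diagonal is a nice clarification that the paper leaves implicit.
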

Then, the \emph{orthonormal} Legendre polynomials are simply\footnote{Of course, $\int_{-1}^{1}{\hat{P}_m(x) \cdot \hat{P}_n(x) ~ dx} = \delta_{m n}$.}
\begin{equation}
\hat{P}_n(x) = \sqrt{\frac{2n +1}{2}\,} ~ P_n(x) \, .
\end{equation}
Since the renormalized Legendre polynomials $\hat{P}_n(x)$ form a basis for the vector space $\,C_{[-1,1]}^{\,0}\,$ of continuous functions over $[-1,1]$, then the Fourier-Legendre series expansion of any function $f(x)$ continuous by parts on this interval will converge to $\,[f(x^{-}) +f(x^{+})]/2$, as long as $f(x)$ has lateral derivatives at $x$, according to Fourier's convergence theorem. Also, for all $\,x \in [-1,1]\,$ where $f(x)$ is continuous, Fourier's theorem allows us to conclude that
\begin{equation}
f(x) = \sum_{n=0}^\infty{\,a_n \, P_n(x)} \, ,
\end{equation}
where the Fourier coefficients are, according to Eq.~\eqref{eq:PnPn},
\begin{equation}
a_n = \frac{\langle P_n | f \rangle}{\langle P_n | P_n \rangle} = \frac{\,\int_{-1}^1{P_n(x) \: f(x) ~ dx}\,}{2/(2n+1)} = \left(n +\frac12 \right) \, \int_{-1}^{\,1}{P_n(x) \: f(x) ~ dx} \, .
\end{equation}
In the computation of these coefficients, it is useful to know that $\,\int_{-1}^1{P_0(x) \: dx} = \int_{-1}^1{1 \: dx} = 2$, $\int_{-1}^1{P_n(x) \: dx} = 0\,$ for all $n \ge 1$, $\int_{-1}^1{x^m \, P_n(x) \: dx} = 0\,$ for all positive integers $\,m < n$, and $\,\int_{-1}^1{x^n \: P_n(x) ~ dx} = 2^{n+1} \: (n!)^2/(2n+1)!$.\footnote{It is easy to show, starting from the Rodrigues formula and integrating by parts, that $\,\int_{-1}^1{x^m \, P_n(x) \: dx} = 0\,$ for all positive integers $\,m < n$. More generally, $\int_{-1}^1{p(x) \: P_n(x) ~ dx} = 0\,$ for any polynomial $p(x)$ with a degree smaller than $n$, as follows from the fact that the Legendre polynomials $\,P_0(x), P_1(x), \ldots, P_n(x)\,$ are orthogonal and linearly independent over $[-1,1]$, composing a vector space of dimension $n+1$.}

We conclude this section mentioning that all Legendre polynomials and their properties can also be derived from the \emph{generating function}
\begin{equation}
\frac{1}{\,\sqrt{1 -2 x\,t +t^2\,}} = \sum_{n=0}^\infty{P_n(x) \: t^n} \: ,
\label{eq:geratriz}
\end{equation}
including the Bonnet's recurrence
\begin{equation}
P_{n+1}(x) = \frac{2n+1}{n+1} \: P_n(x) -\frac{n}{n+1} \: P_{n-1}(x) \, ,
\end{equation}
as done in some textbooks (see, e.g., Sec.~18.1 of Ref.~\cite{Riley3ed}). Indeed, the form of the above generating function comes from the \emph{multipole expansion} of the potential in electrostatics and gravitational problems, which is how the polynomials $\,P_n(\cos{\theta})\,$ were discovered by Legendre in his original work~\cite{Legendre1782}. In fact, there in that paper Legendre realized that the Newtonian potential can be expanded as
\begin{eqnarray}
\frac{1}{\,\left| \vec{r} -\vec{r}\,' \right|\,} = \frac{1}{\,\sqrt{r^2 +{r'}^2 -2 r r' \, \cos{\theta}}\,} \nonumber \\
= \frac{1}{\,r\,\sqrt{1 -2 \, (r'/r) \, \cos{\theta} +(r'/r)^2 }\,} \nonumber \\
= \frac{1}{r} \, \sum_{n=0}^\infty{\left(\frac{r'}{r}\right)^n \, P_n(\cos{\theta})} \, ,
\end{eqnarray}
where $r$ and $r'$ are the lengths of the vectors $\vec{r}$ and $\vec{r}\:'$, respectively, and $\theta$ is the smaller angle between them.\footnote{Here, we are assuming that $\,r' < r$, in order to guarantee convergence. } The substitutions $\,r'/r = t\,$ and $\,\cos{\theta} = \vec{r} \cdot \vec{r}\:'/(r\,r') = x\,$ lead to Eq.~\eqref{eq:geratriz}.

\section{Shifted Legendre Polynomials}

The shifted Legendre Polynomial is defined as $\,\tilde{P}_n(x) := P_n(1 -2 x)$, being normalized on the interval $[0,1]$.\footnote{The reader will find in some textbooks the alternative definition $\,\tilde{P}_n(x) := P_n(2 x -1)$, but we shall follow here that adopted by Beukers in Refs.~\cite{Beukers79,Beukers1980}.}  Of course, we can derive all properties of this new class of polynomials from those already known for $P_n(x)$ by substituting $\,x\,$ by $\,1 -2x$. For instance, the `practical normalization' yields $\,\tilde{P}_n(0) = P_n(1 -2 \cdot 0) = P_n(1) = 1$. For the other endpoint, one has $\,\tilde{P}_n(1) = P_n(1 -2 \cdot 1) = P_n(-1) = (-1)^n$. In fact, this can be generalized in the form of a \emph{reflection formula} around $x=1/2$ valid for all $\,x \in [0,1]$, namely
\begin{eqnarray}
\tilde{P}_n(1-x) = \tilde{P}_n(u) = P_n(1-2u) \nonumber \\
= P_n(1-2\,(1-x)) = P_n(-1+2x) \nonumber \\
= P_n(-(1-2x)) = (-1)^n \, P_n(1-2x) \nonumber \\
= (-1)^n \, \tilde{P}_n(x) \, , \quad \forall \: n \ge 0 \, .
\end{eqnarray}
For $x=1/2$, $\tilde{P}_n(1/2) = P_n(1 -2 \cdot 1/2) = P_n(0)$, which evaluates to $\,0\,$ if $n$ is odd and $\,(-1)^{n/2} \, (n-1)!!/n!!\,$ if $n$ is even.

From Eq.~\eqref{eq:PnExplicito}, the explicit expression of $\,\tilde{P}_n(x)\,$ is given by
\begin{eqnarray}
\tilde{P}_n(x) = P_n(1 -2 x) = \frac{1}{\,2^n} \sum_{k=0}^{\lfloor n/2 \rfloor}{ \frac{(-1)^k}{k!} \; \frac{(2n-2k)!}{\,(n-2k)! \: (n-k)!\,} \; (1 -2 x)^{n-2k}} \nonumber \\
= \frac{1}{\,2^n} \sum_{k=0}^{\lfloor n/2 \rfloor}{ \frac{(-1)^k}{k!} \; \frac{(2n-2k)!}{\,(n-2k)! \: (n-k)!\,} \; \sum_{j=0}^{n-2k}{\binom{n-2k}{j}\,1^j \, (-2 x)^{n-2k-j}}} \nonumber \\
= \sum_{k=0}^{\lfloor n/2 \rfloor}{ \frac{(-1)^k}{2^{2k} \, k!} \; \frac{(2n-2k)!}{(n-k)!}\; \sum_{j=0}^{n-2k}{\frac{1}{j!\,(n-2k-j)!} \, \frac{1}{2^j} \, (-x)^{n-2k-j}}} \nonumber \\
= \frac{1}{2^n} \, \sum_{k=0}^{\lfloor n/2 \rfloor}{ \frac{(-1)^k}{k!} \; \frac{(2n-2k)!}{(n-k)!} \: \sum_{m=0}^{n-2k}{\frac{2^m}{m!\,(n-2k-m)!} \, (-x)^m}} \, ,
\label{eq:tildePnExplicito}
\end{eqnarray}
where we have substituted $n-2k-j$ by $m$ in the last step.\footnote{For $\,n=0$, these sums reduce to only one term, promptly revealing that $\,\tilde{P}_0(x) = 1$, as expected.}  Since this sum is somewhat complex from the computational point of view, let us take into account an explicit representation of $\tilde{P}_n(x)$ used by Beukers in Ref.~\cite{Beukers1980} to show that \emph{all coefficients} of $\,\tilde{P}_n(x)\,$ are \emph{non-null integers} with alternating signs.

\begin{teo}[Simpler explicit representation of $\tilde{P}_n(x)\,$]  For any integer $n \ge 0$, one has
\label{teo:simpler}
\begin{equation}
\tilde{P}_n(x) = \sum_{m=0}^n{\binom{n}{m} \, \binom{n+m}{m} \, (-\,x)^m} \, .
\label{eq:BkTildePn}
\end{equation}
\end{teo}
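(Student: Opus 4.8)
The plan is to read off \eqref{eq:BkTildePn} directly from the representation \eqref{eq:Pn2explicito} of $P_n(x)$ already recorded above, for which a single substitution suffices. Since $\tilde{P}_n(x) := P_n(1-2x)$ by definition, I would replace the argument of $P_n$ in \eqref{eq:Pn2explicito} by $1-2x$. The only thing to notice is that the argument of the binomial power collapses cleanly, namely $\frac{(1-2x)-1}{2} = \frac{-2x}{2} = -x$, so that
\begin{equation*}
\tilde{P}_n(x) = P_n(1-2x) = \sum_{k=0}^{n} \binom{n}{k}\binom{n+k}{k}(-x)^k \, ,
\end{equation*}
which is precisely \eqref{eq:BkTildePn} after renaming the summation index $k \to m$. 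There is no genuine obstacle once \eqref{eq:Pn2explicito} is granted; the content of the theorem is really just this change of variable together with the fact that $\tilde{P}_n$ is, by construction, $P_n$ evaluated at $1-2x$.

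To keep the argument self-contained I would also include a short derivation of \eqref{eq:Pn2explicito} from Rodrigues' formula (Lemma~1), since that formula is the only prior ingredient needed. The idea is to write $(x^2-1)^n = (x-1)^n(x+1)^n$ and expand the second factor about $x=1$ through $x+1 = (x-1)+2$, giving $(x^2-1)^n = \sum_{j=0}^{n}\binom{n}{j}2^{n-j}(x-1)^{n+j}$. Applying $\frac{1}{2^n n!}\frac{d^n}{dx^n}$ term by term and using $\frac{d^n}{dx^n}(x-1)^{n+j} = \frac{(n+j)!}{j!}(x-1)^j$, the factorials reorganize into $\binom{n}{j}\binom{n+j}{j}$ times $\bigl(\tfrac{x-1}{2}\bigr)^j$, which is \eqref{eq:Pn2explicito}. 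Composing this with the substitution above then yields a complete proof of the theorem resting only on Rodrigues' formula.

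Finally, I would remark on the alternative, more laborious route of extracting the coefficient of $(-x)^m$ straight from the explicit double sum \eqref{eq:tildePnExplicito}. That path requires establishing the combinatorial identity
\begin{equation*}
\frac{2^m}{2^n \, m!}\sum_{k=0}^{\lfloor (n-m)/2 \rfloor} \frac{(-1)^k}{k!}\,\frac{(2n-2k)!}{(n-k)!\,(n-2k-m)!} = \binom{n}{m}\binom{n+m}{m} \, ,
\end{equation*}
which is the main difficulty of that approach and would call for a hypergeometric-type summation argument. Since the route through \eqref{eq:Pn2explicito} sidesteps this identity entirely, I would adopt the substitution argument as the proof and merely mention the identity above as a byproduct that one may verify independently.
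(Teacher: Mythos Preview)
Your proposal is correct and follows exactly the paper's own proof: the paper also proves the theorem by substituting $X=1-2x$ into Eq.~\eqref{eq:Pn2explicito} and observing that $(X-1)/2=-x$. The only difference is that you additionally supply a derivation of Eq.~\eqref{eq:Pn2explicito} from Rodrigues' formula, whereas the paper simply quotes that representation without proof; your extra paragraph makes the argument self-contained but is not part of the paper's proof.
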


\begin{proof}
It is enough to substitute $\,X = 1 -2 x\,$ in Eq.~\eqref{eq:Pn2explicito}, which leads to
\begin{eqnarray}
\tilde{P}_n(x) = P_n(X) = \sum_{k=0}^n{\binom{n}{k} \: \binom{n+k}{k} \; \left( \frac{\,X-1}{2}\right)^k} \nonumber \\
= \sum_{k=0}^{n}{ \binom{n}{k} \: \binom{n+k}{k} \; (-\,x)^k} \, .
\end{eqnarray}
\end{proof}

From Theorem~\ref{teo:simpler}, the first few of these polynomials are: $\tilde{P}_0(x) = 1$, $\tilde{P}_1(x) = 1 -2\,x$, $\tilde{P}_2(x) = 1 -6\,x +6\,x^2$, $\tilde{P}_3(x) = 1 -12\,x +30\,x^2 -20\,x^3$, $\tilde{P}_4(x) = 1 -20\,x +90\,x^2 -140\,x^3 +70\,x^4$, and $\,\tilde{P}_5(x) = 1 -30\,x +210\,x^2 -560\,x^3 +630\,x^4 -252\,x^5$. They are plotted in Fig.~\ref{fig3}, below.

\begin{figure}[hbt]
\centering
\scalebox{1.0}{\includegraphics{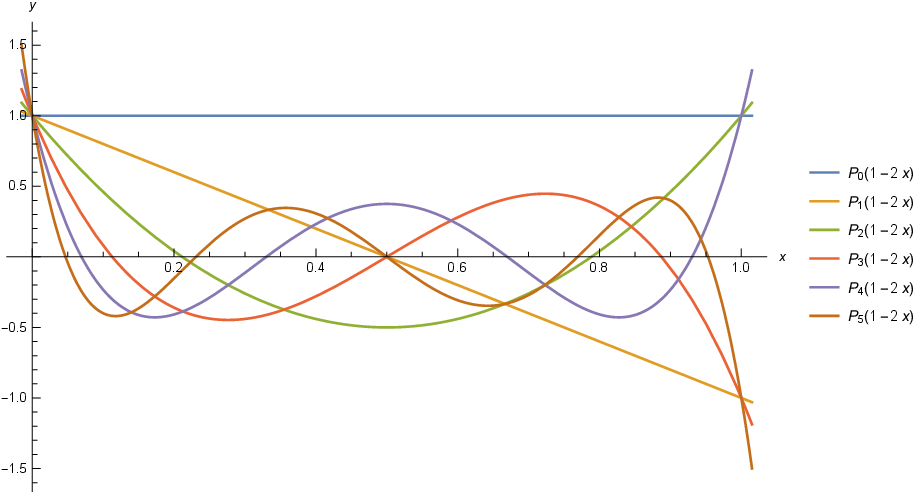}} \label{fig3}
\caption{\small{The first few shifted Legendre polynomials $\,\tilde{P}_n(x)$, as found from Eq.~\eqref{eq:BkTildePn}, which obey the `practical normalization' $\,\tilde{P}_n(0)=1$.}}
\end{figure}

The Polynomial $\tilde{P}_n(x)$ is the main tool of Beukers-like irrationality proofs, which is due to the particular form of its \emph{Rodrigues formula}.

\begin{teo}[Rodrigues' formula for $\tilde{P}_n(x)\,$]
\label{lem:RodriguesTildePn}  For all integers $\,n \ge 0$,
\begin{equation}
\tilde{P}_n(x) = \frac{1}{n!} ~ \frac{d^n}{dx^n} \left[\,x^n \, (1-x)^{n\,} \right] \, .
\end{equation}
\end{teo}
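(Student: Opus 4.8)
The plan is to derive this formula by a change of variable in the Rodrigues formula for $P_n(x)$ that was proved above, rather than reproving everything from scratch. Recall that Rodrigues' formula reads $P_n(t) = \dfrac{1}{2^n\,n!}\,\dfrac{d^n}{dt^n}\left[(t^2-1)^n\right]$. Since $\tilde{P}_n(x) := P_n(1-2x)$, I would simply set $t = 1-2x$ and keep careful track of the two simplifications that this substitution produces.

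First, for the argument of the $n$-th derivative: since $(1-2x)^2 - 1 = 4x^2 - 4x = -4\,x\,(1-x)$, one gets $\left.(t^2-1)^n\right|_{t=1-2x} = (-4)^n\,x^n\,(1-x)^n$. Second, for the differential operator: writing $f(t) = (t^2-1)^n$, the chain rule gives $\dfrac{d^n}{dx^n}\left[f(1-2x)\right] = (-2)^n\,f^{(n)}(1-2x)$, hence $f^{(n)}(1-2x) = (-2)^{-n}\,\dfrac{d^n}{dx^n}\left[f(1-2x)\right]$. Substituting both facts into Rodrigues' formula,
\begin{equation*}
\tilde{P}_n(x) = P_n(1-2x) = \frac{1}{2^n\,n!}\,f^{(n)}(1-2x) = \frac{1}{2^n\,n!}\cdot\frac{1}{(-2)^n}\,\frac{d^n}{dx^n}\left[(-4)^n\,x^n\,(1-x)^n\right] .
\end{equation*}
The constants now cancel completely, because $2^n\,(-2)^n = (-4)^n$, leaving exactly $\tilde{P}_n(x) = \dfrac{1}{n!}\,\dfrac{d^n}{dx^n}\left[x^n\,(1-x)^n\right]$, which is the asserted identity.

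I do not expect a genuine obstacle here; the only thing that needs care is the bookkeeping of the factors $(-2)^n$ and $(-4)^n$, so that no stray sign or power of two survives. As a consistency check --- or as an alternative self-contained argument not relying on the Rodrigues formula for $P_n$ --- I would expand $x^n(1-x)^n = \sum_{j=0}^n (-1)^j\binom{n}{j}\,x^{n+j}$, differentiate term by term using $\dfrac{d^n}{dx^n}\,x^{n+j} = \dfrac{(n+j)!}{j!}\,x^j$, and simplify $\dfrac{1}{n!}\binom{n}{j}\dfrac{(n+j)!}{j!} = \binom{n}{j}\binom{n+j}{j}$; this recovers precisely the representation of Theorem~\ref{teo:simpler}, Eq.~\eqref{eq:BkTildePn}, and so furnishes a second proof. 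I would present the change-of-variable computation as the main proof and add the expansion as a brief remark.
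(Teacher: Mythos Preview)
Your proof is correct and follows essentially the same route as the paper: both substitute $t=1-2x$ into the already-proved Rodrigues formula for $P_n$, track how the differential operator and the polynomial $(t^2-1)^n$ transform, and observe that the powers of $2$ (and the signs) cancel. Your additional consistency check via the binomial expansion, which recovers Eq.~\eqref{eq:BkTildePn} directly, is a nice touch that the paper does not include.
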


\begin{proof} For $\,n=0$, Rodrigues formula returns $\,\tilde{P}_0(x) = (1/0!) \times \left[x^0 \, (1-x)^0 \right] = 1$, in agreement to our Eq.~\eqref{eq:BkTildePn}. For $\,n>0$, we have already shown that $\,P_n(X) = \frac{1}{2^n \, n!}\frac{d^n}{dX^n} \, \left[\left(X^2-1\right)^n \right]$ for the usual Legendre polynomial. On taking $X=1-2x$, one finds
\begin{equation}
\tilde{P}_n(x) := P_n(1-2x) = \frac{1}{2^n \, n!} \, \frac{d^n}{d(1-2x)^n} \, \left[ (1 -2 x)^2 -1 \right]^n \, .
\end{equation}
Since $\,d(1-2x) = -2 \, dx$, then $\,d(1-2x)^n = (-2)^n \, dx^n$  and, being $\left[(1-2x)^2 -1\right]^n = \left(4 x^2 -4 x\right)^n = 4^n \, \left(x^2-x\right)^n$, it follows that
\begin{eqnarray}
\tilde{P}_n(x) = \frac{1}{2^n \, n!} \, \frac{d^n}{(-2)^n dx^n} \, \left[4^n \, \left(x^2-x\right)^n \right] \nonumber \\
= \frac{4^n}{2^n \, 2^n \, n!} \, \frac{d^n}{dx^n} \, \left[\left(x -x^2\right)^n\right] \nonumber \\
= \frac{1}{n!} ~ \frac{d^n}{dx^n} \, \left[ \left(x -x^2 \right)^n\right] \nonumber \\
= \frac{1}{n!} ~ \frac{d^n}{dx^n} \, \left[\,x^n \, (1-x)^{n\,} \right] \, .
\end{eqnarray}
\end{proof}

We can apply the Leibnitz rule for the derivative of the product of two functions in order to derive another representation for $\tilde{P}_n(x)$.

\begin{teo}[Another representation for $\tilde{P}_n(x)\,$]  For all integers $\,n \ge 0$,
\begin{equation}
\tilde{P}_n(x) = \sum_{k=0}^n{\binom{n}{k}^2 \; \sum_{j=0}^k{\binom{k}{j} \, (-x)^{n-k+j} } } \: .
\end{equation}
\end{teo}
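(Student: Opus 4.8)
The plan is to start from the Rodrigues formula for $\tilde{P}_n(x)$ just established in Theorem~\ref{lem:RodriguesTildePn}, namely $\tilde{P}_n(x) = \frac{1}{n!} \frac{d^n}{dx^n}\left[x^n (1-x)^n\right]$, and apply the Leibnitz rule $\frac{d^n}{dx^n}(f g) = \sum_{k=0}^n \binom{n}{k} f^{(k)} g^{(n-k)}$ with $f(x) = x^n$ and $g(x) = (1-x)^n$. First I would compute the two families of derivatives: $\frac{d^k}{dx^k} x^n = \frac{n!}{(n-k)!}\, x^{n-k}$ and $\frac{d^{n-k}}{dx^{n-k}}(1-x)^n = (-1)^{n-k}\, \frac{n!}{k!}\,(1-x)^{k}$, valid for $0 \le k \le n$. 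Substituting these into the Leibnitz expansion and dividing by $n!$ gives
\begin{equation}
\tilde{P}_n(x) = \sum_{k=0}^n \binom{n}{k}\,\frac{1}{(n-k)!}\,\frac{1}{k!}\,n!\,(-1)^{n-k}\,x^{n-k}\,(1-x)^k = \sum_{k=0}^n \binom{n}{k}^2 (-1)^{n-k}\, x^{n-k}\,(1-x)^k,
\end{equation}
where I have used $\binom{n}{k} \cdot \frac{n!}{(n-k)!\,k!} = \binom{n}{k}^2$.

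Next I would expand the factor $(1-x)^k$ by the binomial theorem, $(1-x)^k = \sum_{j=0}^k \binom{k}{j}(-x)^j = \sum_{j=0}^k \binom{k}{j}(-1)^j x^j$. Inserting this and combining the powers of $x$ and the signs gives the term $(-1)^{n-k}(-1)^j x^{n-k+j} = (-1)^{n-k+j} x^{n-k+j} = (-x)^{n-k+j}$ (since $(-1)^{n-k+j} x^{n-k+j} = (-x)^{n-k+j}$ for any integer exponent). Pulling the $\binom{n}{k}^2$ outside the inner sum then yields exactly
\begin{equation}
\tilde{P}_n(x) = \sum_{k=0}^n \binom{n}{k}^2 \sum_{j=0}^k \binom{k}{j}\,(-x)^{n-k+j},
\end{equation}
which is the claimed identity.

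I do not expect any genuine obstacle here; the argument is a direct bookkeeping exercise once the Rodrigues formula is in hand. The only point requiring a little care is keeping track of the signs: one must make sure that the $(-1)^{n-k}$ coming from differentiating $(1-x)^n$ and the $(-1)^j$ coming from expanding $(1-x)^k$ combine correctly into the single factor $(-x)^{n-k+j}$, rather than leaving a stray sign. A secondary small check is the boundary behaviour of the derivative formulas for $k=0$ and $k=n$ (where one of the factors is differentiated zero times), but both special cases are subsumed by the general formulas $\frac{d^k}{dx^k}x^n = \frac{n!}{(n-k)!}x^{n-k}$ and $\frac{d^{n-k}}{dx^{n-k}}(1-x)^n = (-1)^{n-k}\frac{n!}{k!}(1-x)^k$ with the usual convention $0! = 1$. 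Finally, as a consistency sanity check one could note that setting $x=0$ collapses the inner sum to its $j=k$ term only when $n-k+j=0$, i.e. forces all terms to vanish except $k=n$, $j=0$... more simply, $x=0$ gives $(-x)^{n-k+j}=0$ unless $n-k+j=0$, which with $0\le j\le k\le n$ forces $k=n,j=0$, leaving $\binom{n}{n}^2\binom{n}{0}=1=\tilde{P}_n(0)$, consistent with the practical normalization.
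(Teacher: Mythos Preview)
Your proposal is correct and follows essentially the same route as the paper: start from the Rodrigues formula for $\tilde{P}_n(x)$, apply the Leibnitz rule to $x^n(1-x)^n$ to obtain $\sum_{k=0}^n \binom{n}{k}^2 (-x)^{n-k}(1-x)^k$, and then expand $(1-x)^k$ by the binomial theorem. Your sign bookkeeping and the $x=0$ sanity check are fine; there is nothing to add.
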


\begin{proof}  Leibnitz rule for the derivative of the product of two functions reads $\,(u\,v)' = u'\,v +u\,v'$. The repeated application of this rule leads to
\begin{equation}
(u\,v)^{(k)} = \sum_{j=0}^k{\binom{k}{j} \: u^{(j)} \; v^{(k-j)}} \, .
\label{eq:LeibRule}
\end{equation}
On applying this general rule to the derivative in Rodrigues' formula, one finds
\begin{eqnarray}
\tilde{P}_n(x) = \frac{1}{n!} \, \left[x^n \, (1-x)^n\right]^{(n)} = \frac{1}{n!} ~ \sum_{j=0}^n{\binom{n}{j} \, \left(x^n \right)^{(j)} \, \left[ (1 -x)^n \right]^{(n-j)}} \nonumber \\
= \frac{1}{n!} ~ \sum_{k=0}^n{\binom{n}{k} \, \frac{n!}{(n-k)!} \, x^{n-k} \, \frac{n!}{k!} \, (1-x)^k \, (-1)^{n-k}} \nonumber \\
= \sum_{k=0}^n{\binom{n}{k} \, \binom{n}{k} \, (-x)^{n-k} \, (1-x)^k} \, .
\end{eqnarray}
From the binomial theorem, we known that $\,(1-x)^k = \sum_{j=0}^k{\binom{k}{j} \, 1^{k-j}\,(-x)^j} = \sum_{j=0}^k{\binom{k}{j} \, (-x)^j}$. The substitution of this result in the last sum, above, completes the proof.
\end{proof}

This representation confirms that all coefficients of $\tilde{P}_n(x)$ are non-null integers.\footnote{The shifted Legendre polynomials have a nice representation in the hypergeometric form, namely $\,\tilde{P}_n(x) = {}_2F_1{[-n,n+1;1;x]}$, found by substituting $x$ by $1-2x$ in that hypergeometric form mentioned in the previous section (see Footnote~\ref{ft:2F1}). We shall not explore this form here.}

The polynomials $\tilde{P}_n(x)$ are \emph{orthogonal} over $[0,1]$, with weighting function $\,w(x)=1$, and, over this interval, they can be renormalized in order to compose an \emph{orthonormal basis}. In order to show this, it is enough to determine their \emph{orthonormality} relation, as follows:
\begin{eqnarray}
\langle \tilde{P}_m | \tilde{P}_n \rangle = \int_0^1{ \tilde{P}_m(x) \cdot \tilde{P}_n(x) ~ dx} = \int_1^{-1}{ P_m(u) \cdot P_n(u) ~ \frac{du}{(-2)}} \nonumber \\
= \frac{1}{2} \, \int_{-1}^{\,1}{ P_m(u) \cdot P_n(u) ~ du} \nonumber \\
= \frac{1}{\,2n+1} \; \delta_{m n} \: ,
\label{eq:PnTilOrtonormal}
\end{eqnarray}
where we have made use of the simple substitution $\,u = 1 -2x$. The renormalized polynomials are then given by $\,\hat{\tilde{P}}_n(x) = \sqrt{2n +1} \; \tilde{P}_n(x)$. These \emph{orthonormal} polynomials form a basis for the vector space $\,C_{[0,1]}^{\,0}\,$ of continuous functions over $[0,1]$, so the corresponding Fourier-Legendre series for any function $f(x)$ continuous by parts on this interval will converge to $\,[f(x^{-}) +f(x^{+})]/2$, as long as $f(x)$ has lateral derivatives at $x$, according to Fourier's convergence theorem. Also, for all $\,x \in [0,1]\,$ where $f(x)$ is continuous, Fourier's theorem guarantees that
\begin{equation}
f(x) = \sum_{n=0}^\infty{\,b_n \, \tilde{P}_n(x)} \, ,
\end{equation}
where the Fourier coefficients are, according to Eq.~\eqref{eq:PnTilOrtonormal},
\begin{equation}
b_n = \frac{\langle \tilde{P}_n | f \rangle}{\langle \tilde{P}_n | \tilde{P}_n \rangle} = \frac{\,\int_0^1{\tilde{P}_n(x) \: f(x) ~ dx}\,}{1/(2n+1)} = (2 n +1) \, \int_0^{\,1}{\tilde{P}_n(x) \: f(x) ~ dx} \, .
\end{equation}
In the computation of these coefficients, it is useful to know that $\,\int_0^1{\tilde{P}_0(x) \: dx} = \int_0^1{1 \: dx} = 1$, $\int_0^1{\tilde{P}_n(x) \: dx} = 0\,$ for all $n \ge 1$, $\int_0^1{x^m \, \tilde{P}_n(x) \: dx} = 0\,$ for all positive integers $\,m < n$, and
\begin{equation}
\int_0^1{x^n \: \tilde{P}_n(x) ~ dx} = (-1)^n \: \frac{(n!)^2}{\,(2n+1)!} \, .
\end{equation}
For a proof of this last integral, it is enough to substitute $\,u = 1 -2x\,$ in the corresponding integral for $P_n(x)$, namely
\begin{equation}
\int_{-1}^1{u^n \: P_n(u) ~ du} = 2^{n+1} \: \frac{(n!)^2}{\,(2n+1)!} \, ,
\end{equation}
which results in
\begin{equation}
\int_0^1{(1 -2x)^n \: \tilde{P}_n(x) ~ dx} = 2^n \: \frac{(n!)^2}{\,(2n+1)!} \, ,
\end{equation}
and then to apply the binomial theorem to expand the factor $(1-2x)^n$. This leads to
\begin{equation}
\sum_{k=0}^n{\binom{n}{k} \, (-1)^k \, 2^k \, \int_0^1{x^k \: \tilde{P}_n(x) ~ dx}} = 2^n \: \frac{(n!)^2}{\,(2n+1)!} \, .
\end{equation}
Since all terms of this sum are null, except the last (i.e., that for $k=n$), it reduces to
\begin{equation}
(-1)^n \; 2^n \, \int_0^1{x^n \: \tilde{P}_n(x) ~ dx} = 2^n \: \frac{(n!)^2}{\,(2n+1)!} \, ,
\end{equation}
which promptly simplifies to the desired result.

An important application of shifted Legendre polynomials is in irrationality proofs using Beukers-like integrals over $[0,1]$. The choice of $\tilde{P}_n(x)$ to develop those proofs comes from the possibility of performing integration by parts easily.

\begin{teo}[Integration by parts with $\,\tilde{P}_n(x)\,$]
\label{teo:intPnx}
\; Given an integer $\,n \ge 0\,$ and a function $\,f\!: [0,1] \rightarrow \mathds{R}\,$ of class $\,\mathcal{C}^n$, one has
\begin{equation}
\int_0^1{\tilde{P}_n(x) \, f(x) \: dx} = \frac{(-1)^n}{n!} \: \int_0^1{x^n \, (1-x)^n ~ \frac{d^n f}{dx^n} ~ dx} \, .
\label{eq:intIn}
\end{equation}
\end{teo}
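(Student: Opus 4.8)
The plan is to use the Rodrigues' formula for $\tilde{P}_n(x)$ (Theorem~\ref{lem:RodriguesTildePn}) to rewrite the left-hand integrand and then integrate by parts $n$ times, discarding all boundary terms. Concretely, I would start from
\[
\int_0^1{\tilde{P}_n(x) \, f(x) \: dx} = \frac{1}{n!} \int_0^1{ \left[ \frac{d^n}{dx^n} \bigl(x^n (1-x)^n\bigr) \right] f(x) \: dx} \, ,
\]
and then apply integration by parts repeatedly, each time moving one derivative off the polynomial factor $g(x) := x^n(1-x)^n$ and onto $f$. After $k$ steps the integral becomes $\frac{(-1)^k}{n!}\int_0^1 g^{(n-k)}(x)\, f^{(k)}(x)\,dx$ plus accumulated boundary terms; after $k=n$ steps one arrives at $\frac{(-1)^n}{n!}\int_0^1 g(x)\, f^{(n)}(x)\,dx$, which is exactly the claimed right-hand side once $g(x)=x^n(1-x)^n$ is substituted.

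The key step — and the only place any care is needed — is showing that every boundary term vanishes. At stage $k$ (with $0 \le k \le n-1$) the boundary term is $\frac{(-1)^k}{n!}\left[ g^{(n-1-k)}(x)\, f^{(k)}(x) \right]_0^1$. I would argue that $x=0$ is a zero of $g(x)=x^n(1-x)^n$ of multiplicity $n$, hence a zero of every derivative $g^{(j)}$ for $j \le n-1$; symmetrically, $x=1$ is a zero of multiplicity $n$ of $g$, hence of $g^{(j)}$ for $j \le n-1$. Since $n-1-k$ ranges over $\{0,1,\dots,n-1\}$ as $k$ runs over $\{0,\dots,n-1\}$, in every case $g^{(n-1-k)}$ vanishes at both endpoints, killing the boundary term regardless of the (finite) values $f^{(k)}(0)$, $f^{(k)}(1)$. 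The hypothesis $f \in \mathcal{C}^n$ guarantees that all the derivatives $f^{(k)}$, $k \le n$, appearing in the integrations by parts exist and are continuous on $[0,1]$, so each integration by parts is legitimate.

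To make the multiplicity claim self-contained I would note that $g(x) = x^n(1-x)^n$ and apply the Leibniz rule: $g^{(j)}(x) = \sum_{i=0}^j \binom{j}{i} (x^n)^{(i)} \bigl((1-x)^n\bigr)^{(j-i)}$; for $j \le n-1$ every term still contains a positive power of $x$ (since $(x^n)^{(i)}$ is a multiple of $x^{n-i}$ with $n-i \ge 1$) and a positive power of $(1-x)$, so $g^{(j)}(0)=g^{(j)}(1)=0$. The case $n=0$ is trivial, as both sides equal $\int_0^1 f(x)\,dx$. I do not anticipate a genuine obstacle here; the proof is a routine but clean induction on the number of integrations by parts, and the main thing to get right is bookkeeping the sign $(-1)^k$ and the order of the surviving derivative $g^{(n-1-k)}$ in the boundary term.
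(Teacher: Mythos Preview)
Your proposal is correct and follows essentially the same approach as the paper: start from the Rodrigues formula for $\tilde{P}_n(x)$, integrate by parts repeatedly, and use the Leibniz rule to verify that every boundary term $g^{(n-1-k)}(x)\,f^{(k)}(x)\big|_0^1$ vanishes because each summand retains positive powers of both $x$ and $(1-x)$. The paper carries out the first integration by parts explicitly and then appeals to the evident pattern, whereas you state the general $k$-th step directly, but the argument is the same.
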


\begin{proof} \, For $\,n=0$, since $\tilde{P}_0(x)=1\,$ one finds $\,\int_0^1{\tilde{P}_0(x) \: f(x) ~ dx} = \int_0^1{f(x) ~ dx}$, which agrees with Eq.~\eqref{eq:intIn}, because $\,f^{(0)}(x)=f(x)$. For $\,n>0$, the proof is a sequence of integration by parts (but it suffices to make the first one and observe the pattern). From Rodrigues' formula (our Lemma~\ref{lem:RodriguesTildePn}), it follows that
\begin{eqnarray}
I_n := \int_0^1{\tilde{P}_n(x) \, f(x) \: dx} = \int_0^1{\frac{1}{n!} \: \frac{d^n}{dx^n}\left[x^n\,(1-x)^n \right] \: f(x) ~ dx}  \nonumber \\
= \frac{1}{n!} \, \int_0^1{\frac{d}{dx} \left\{ \frac{d^{n-1}}{dx^{n-1}}\left[x^n\,(1-x)^n \right] \right\} \, f(x) ~ dx} \, .
\label{eq:InBoa}
\end{eqnarray}
Since $\,\int{u \, dv} = u \, v -\int{v \, du}$, then let us choose $\,u=f(x)\,$ and $\,dv = \frac{d}{dx}\{\ldots\} \: dx$. With this choice, $du = f'(x) \, dx\,$ and $\,v = \{\ldots\}$, so
\begin{eqnarray}
n! \: I_n = \left[f(x) \, \frac{d^{\,n-1}}{dx^{n-1}}\left(x^n \, (1-x)^n \right)\right]_0^1 -\int_0^1{\frac{d^{n-1}}{dx^{n-1}}\left[x^n\,(1-x)^n\right] \: f'(x)} \: dx  \nonumber \\
= \left[f(x) \, \sum_{k=0}^{n-1}{\binom{n-1}{k}\,(x^n)^{(k)} \, \left[(1-x)^n\right]^{(n-1-k)} } \right]_0^1 -\int_0^1{\left[x^n\,(1-x)^n\right]^{(n-1)} \, f'(x)} \: dx \, , \quad
\end{eqnarray}
where we have made use of the Leibnitz rule for the higher derivatives of the product of two functions, our Eq.~\eqref{eq:LeibRule}. On calculating these derivatives, one finds
\begin{eqnarray}
n! \: I_n = \left[f(x) \, \sum_{k=0}^{n-1}{\binom{n-1}{k} \, \frac{n!}{(n-k)!} \, x^{n-k} \, \frac{n!}{(k+1)!} \, (-1)^k \, (1-x)^{k+1} } \right]_0^1  \nonumber \\
-\int_0^1{\left[x^n\,(1-x)^n\right]^{(n-1)} \, f'(x)} \: dx  \nonumber \\
= \, \left[ f(1) \times 0 - f(0) \times 0 \right] -\int_0^1{\left[x^n\,(1-x)^n\right]^{(n-1)} \, f'(x)} \: dx  \nonumber \\
= -\int_0^1{\left[x^n\,(1-x)^n\right]^{(n-1)} \, f'(x)} \: dx \, , \;
\end{eqnarray}
where the zeros multiplying $\,f(1)\,$ and $\,f(0)\,$ come from the presence of factors $\,(1-x)\,$ and $\,x$, respectively, with positive exponents in every terms of the finite sum on $k$. Therefore, the overall result of the first integration by parts is the transference of a derivative $\,d/dx\,$ from $\,\tilde{P}_n(x)\,$ to the function $f(x)$ and a change of sign. Each further integration by parts will have the same effect.
\end{proof}

Once $f(x)$ is chosen such that $\,\int_0^1{f(x) ~ dx}\,$ is a simple function of the number $\xi$ whose irrationality we are intending to prove (usually a linear form in $\mathds{Q}$), the integrals $I_n$ defined in Eq.~\eqref{eq:InBoa} are such that
\begin{eqnarray}
\left| I_n \right| = \frac{1}{n!} \: \left| \int_0^1{x^n \, (1-x)^n ~ \frac{d^n f}{dx^n} ~ dx} \right| \nonumber \\
= \frac{1}{n!} \: \left| \int_0^1{g^n(x) \: h(x) ~ dx} \right| \nonumber \\
\le \frac{1}{n!} \: M^n \, \left| \int_0^1{h(x) ~ dx} \right| \, ,
\end{eqnarray}
where $M$ is the maximum of $g(x)$ over $[0,1]$. Then, if $M$ is small enough such that the last expression tends to $0$ as $n \rightarrow \infty$, even when it is multiplied by certain integers $Q_n$, then an irrationality proof for $\xi$ can, {\it a priori}, be developed. For more details on this kind of proof, see, e.g., Refs.~\cite{Beukers79,Huyle2001,Lima2021} and references therein.

%\newpage

\vspace{1.2cm}

\appendix

\section*{Appendix}

\vspace{0.5cm}

\section{The roots of $\,P_n(x) = 0\,$ are \emph{distinct} numbers in $\,(-1,1)$}

\setcounter{equation}{0}

\renewcommand{\theequation}{A.\arabic{equation}}

Let us show that, for any integer $n>0$, all the $n$ roots of $\,P_n(x) = 0\,$ are \emph{distinct real numbers} belonging to the open interval $(-1,1)$.

\begin{proof}
For this, we begin noting that the zeros of the real function
\begin{equation}
f(x) = (x^2 -1)^n = (x+1)^n\,(x-1)^n \, ,
\end{equation}
$n$ being any positive integer, are $\,x = \pm 1$, each with multiplicity $n$. From Rolle's theorem, we know that there is at least one $\,c \in (-1,1)\,$ for which $\,f'(x) = 0$. By direct evaluation of the derivative, it is easy to check that such root lies at $\,c=0$, besides the roots at $\,x = \pm 1$, each with multiplicity $n-1$. Analogously, $f''(x) = 0\,$ has two roots, one in $\,(-1,0)\,$ and the other in $\,(0,1)$, besides the roots at $\,x = \pm 1$, each with multiplicity $n-2$. The repetition of this reasoning for the higher derivatives of $f(x)$ will lead to the application of Rolle's theorem to $\,f^{(n-1)}(x)$, which will yield simple zeros at $n-1$ distinct points in $(-1,1)$, besides those zeros at $\,x = \pm 1$, now with multiplicity $1$ (i.e., simple zeros). This implies that the equation
\begin{equation}
\frac{d^n}{\,dx^n} \left[ (x^2 -1)^n \right] = 0
\end{equation}
has $n$ distinct roots, \emph{each lying between two consecutive zeros of} $f^{(n-1)}(x)$. Hence, according to Rodrigues formula,
\begin{equation}
P_n(x) = \frac{1}{\,2^n \, n!} \; \frac{d^n}{\,dx^n} \left[ (x^2 -1)^n \right] \, ,
\end{equation}
which is a multiple of $\,f^{(n)}(x)$, will have $n$ distinct real roots in $(-1,1)$.
%\begin{flushright} $\Box$ \end{flushright}
\end{proof}

\newpage

\section{Legendre polynomials on any interval $[a,b]$}

From the results derived in the main text, we can create classes of shifted Legendre polynomials normalized over any real interval $[a,b]$, $b>a$.  A simple, and computationally efficient way is to perform a direct transformation of $P_n(x)$, originally defined on $[-1,1]$, an interval over which they are orthogonal, with weighting function $\,w(x)=1$, to a new polynomial $\tilde{P}_n(x)$ normalized over $[a,b]$, thus maintaining the orthogonality property. For this, it is enough to take
\begin{equation}
\tilde{P}_n(x) = P_n\left(\frac{2 x -a -b}{b -a}\right) = P_n(\alpha \, x -\beta) \, ,
\label{eq:defPntilab}
\end{equation}
where $P_n(x)$ is the original Legendre polynomial, normalized over $[-1,1]$, $\alpha = 2/(b-a)$, and $\,\beta = (b+a)/(b-a)$. It is easy to see that these new polynomials $\tilde{P}_n(x)$ will be \emph{orthogonal} over $[a,b]$ and that they will have an \emph{orthonormality relation}.

\begin{teo}[Orthonormality relation for $\tilde{P}_n(x)$ over $(a,b)\,$]  For any real interval $[a,b]$, $b>a$, and any non-negative integers $m$ and $n$, the following orthonormality relation holds:
\begin{equation}
\int_a^b{\tilde{P}_m(x) \; \tilde{P}_n(x) ~ dx} = \frac{\,b -a\,}{\,2n +1} \;\, \delta_{mn} \: .
\end{equation}
\end{teo}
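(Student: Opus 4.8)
The plan is to reduce the statement to the orthonormality relation already established on $[-1,1]$, namely Theorem~\ref{teo:ortoPn}, via the linear change of variables that defines $\tilde P_n(x)$ in Eq.~\eqref{eq:defPntilab}. First I would start from the left-hand side $\int_a^b \tilde P_m(x)\,\tilde P_n(x)\,dx$ and substitute $u = \alpha x - \beta = (2x - a - b)/(b-a)$, so that $\tilde P_k(x) = P_k(u)$ for each $k$. Under this substitution $du = \alpha\,dx = \frac{2}{b-a}\,dx$, hence $dx = \frac{b-a}{2}\,du$; and the endpoints transform as $x = a \mapsto u = -1$ and $x = b \mapsto u = +1$ (this is exactly why $\alpha$ and $\beta$ were chosen that way). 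Therefore the integral becomes $\frac{b-a}{2}\int_{-1}^{+1} P_m(u)\,P_n(u)\,du$.

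Next I would invoke Theorem~\ref{teo:ortoPn} directly: $\int_{-1}^{+1} P_m(u)\,P_n(u)\,du = \frac{2}{2n+1}\,\delta_{mn}$. Multiplying by the Jacobian factor $\frac{b-a}{2}$ gives $\frac{b-a}{2}\cdot\frac{2}{2n+1}\,\delta_{mn} = \frac{b-a}{2n+1}\,\delta_{mn}$, which is precisely the claimed formula. This simultaneously proves orthogonality (the $m\neq n$ case) and the normalization constant (the $m=n$ case), so no separate argument for each is needed.

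There is essentially no hard part here: the entire content is the observation that the affine map $x \mapsto (2x-a-b)/(b-a)$ carries $[a,b]$ bijectively onto $[-1,1]$ with constant Jacobian, so orthogonality is preserved and the $L^2$ norm is simply rescaled by $\sqrt{(b-a)/2}$. The only point requiring a moment's care is bookkeeping the constants $\alpha,\beta$ and confirming the limits of integration flip to $-1$ and $+1$ in the correct order (they do, since $\alpha > 0$ for $b > a$, so the map is increasing). One could also remark, as a consequence, that the orthonormal family on $[a,b]$ is $\hat{\tilde P}_n(x) = \sqrt{(2n+1)/(b-a)}\;\tilde P_n(x)$, generalizing the $[0,1]$ case obtained earlier by taking $a=0$, $b=1$.
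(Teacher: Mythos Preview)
Your argument is correct and is essentially the same as the paper's: both reduce the $[a,b]$ integral to the known $[-1,1]$ orthonormality via the affine substitution $u=(2x-a-b)/(b-a)$ (the paper writes it in the inverse direction, Eq.~\eqref{eq:substFim}), then read off the factor $(b-a)/2$ from the Jacobian. The only cosmetic difference is that you start from the $[a,b]$ side and transform to $[-1,1]$, whereas the paper starts from the $[-1,1]$ relation and transforms to $[a,b]$.
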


\begin{proof}
From our Theorem~\ref{teo:ortoPn}, the original Legendre polynomials $P_n(x)$ satisfy
\begin{equation}
\int_{-1}^{1}{P_m(x) \: P_n(x) ~ dx} = \frac{2}{2n+1} \; \delta_{mn} \, .
\end{equation}
For any other interval $[a,b]$, $b>a$, it is enough to substitute
\begin{equation}
x = \frac{\,b -a\,}{2} \; t +\frac{\,a +b}{2} \: .
\label{eq:substFim}
\end{equation}
From the orthonormality relation for $P_n(x)$ on $[-1,1]$, above, one finds
\begin{eqnarray}
\frac{2}{2n+1} \: \delta_{mn} = \int_{-1}^{1}{P_m(t) \: P_n(t) ~ dt} \nonumber \\
= \int_a^b{P_m\!\left(\frac{2x-a-b}{b-a} \right) \: P_n\!\left(\frac{2x-a-b}{b-a}\right) \: \frac{2}{\,b-a\,} ~ dx} \, .
\end{eqnarray}
From Eq.~\eqref{eq:defPntilab}, this reduces to
\begin{equation}
\int_a^b{\tilde{P}_m(x) \; \tilde{P}_n(x) ~ dx} = \frac{\,b-a\,}{2} \; \frac{2}{\,2n+1\,} \; \delta_{mn} = \frac{b-a}{\,2n +1\,} \;\, \delta_{mn} \: .
\end{equation}
\end{proof}

The integration by parts of $\tilde{P}_n(x)$ multiplied by a suitable function $f(x)$, as seen in our Theorem~\ref{teo:intPnx}, can be generalized as follows.

\begin{teo}[Integration by parts with $\,d^{\,n} f/dx^n\,$]
\label{teo:intPnxab}
\; Given an integer $\,n \ge 0\,$ and two functions $\,f,g\!: [a,b] \rightarrow \mathds{R}\,$ of class $\,\mathcal{C}^n$, if $\,\left[ g(x) \cdot f^{(k)}(x) \right]_a^b = 0\,$ for all $\,1 \le k < n$, then
\begin{equation}
\int_a^b{\frac{d^{\,n} f}{dx^n} \; g(x) ~ dx} = (-1)^n \, \int_a^b{f(x) \: \frac{d^{\,n} g}{dx^n} ~ dx} \, .
\label{eq:intInab}
\end{equation}
\end{teo}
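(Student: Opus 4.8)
The plan is to prove the identity by induction on $n$, carrying out one integration by parts at a time, exactly as in the proof of Theorem~\ref{teo:intPnx} (``it suffices to make the first one and observe the pattern''), only now with arbitrary endpoints $a<b$ and with both functions left general.

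First I would settle the base case $n=0$: the asserted equality is then $\int_a^b f\,g\;dx=\int_a^b f\,g\;dx$, which is immediate, and the boundary hypothesis is vacuous. For the inductive step, assume the statement for $n-1$ in place of $n$, for every pair of $\mathcal{C}^{\,n-1}$ functions meeting the corresponding endpoint conditions. Given $f,g\in\mathcal{C}^{\,n}$ with $\bigl[\,g\,f^{(k)}\,\bigr]_a^b=0$ for $1\le k<n$, I would write $f^{(n)}=\bigl(f^{(n-1)}\bigr)'$ and integrate by parts once:
\begin{equation}
\int_a^b \frac{d^n f}{dx^n}\,g\;dx
=\Bigl[\,f^{(n-1)}\,g\,\Bigr]_a^b-\int_a^b f^{(n-1)}\,g'\;dx .
\end{equation}
The boundary term $\bigl[\,f^{(n-1)}g\,\bigr]_a^b$ is exactly the $k=n-1$ instance of the hypothesis, so it vanishes. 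The remaining integral $\int_a^b f^{(n-1)}\,g'\;dx$ has the same shape with $n$ replaced by $n-1$ and $g$ replaced by $g'$, so the inductive hypothesis yields $\int_a^b f^{(n-1)}\,g'\;dx=(-1)^{n-1}\int_a^b f\,(g')^{(n-1)}\;dx=(-1)^{n-1}\int_a^b f\,\bigl(d^n g/dx^n\bigr)\,dx$. Substituting back and absorbing the extra minus sign gives $(-1)^n\int_a^b f\,\bigl(d^n g/dx^n\bigr)\,dx$, which is the claim.

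Equivalently one may dispense with the induction and iterate the integration by parts $n$ times directly; the total boundary contribution collected along the way is $\sum_{j=0}^{n-1}(-1)^{j}\bigl[\,f^{(n-1-j)}g^{(j)}\,\bigr]_a^b$, and each summand must be annihilated by the assumed vanishing of the relevant endpoint products (in the governing application $g=x^n(1-x)^n$, where every derivative up to order $n-1$ vanishes at $x=0$ and $x=1$, this is automatic). I expect the only delicate point — the main obstacle — to be the bookkeeping that accompanies the reduction step: one must verify that the endpoint conditions assumed for $(f,g)$ at level $n$ genuinely entail those needed for $(f,g')$ at level $n-1$, so that the inductive hypothesis is legitimately invoked; once the conditions are tracked through the shift $g\mapsto g'$, the rest is just the one-line integration by parts used repeatedly throughout the paper.
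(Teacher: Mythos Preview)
Your strategy---one integration by parts at a time, phrased either as formal induction or as an iterated pattern---is exactly what the paper does: it carries out the single step from $J_n$ to $-\int_a^b f^{(n-1)}\,g'\,dx$ and then simply asserts that ``each further integration by parts will do the same.'' So the approaches coincide.

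The ``delicate point'' you flag, however, is real and cannot be closed from the hypothesis as written. The conditions $\bigl[g\,f^{(k)}\bigr]_a^b=0$ for $1\le k<n$ do \emph{not} imply the conditions $\bigl[g'\,f^{(k)}\bigr]_a^b=0$ for $1\le k<n-1$ that your inductive call on $(f,g')$ requires; equivalently, in the iterated form the total boundary contribution $\sum_{j=0}^{n-1}(-1)^{j}\bigl[f^{(n-1-j)}g^{(j)}\bigr]_a^b$ involves every $j$, while the stated hypothesis covers only $j=0$ (and even there omits $k=0$, so already $n=1$ is unsupported). A concrete counterexample on $[0,1]$ with $n=2$: take $f(x)=x^{2}$ and $g(x)=x(1-x)$; the lone hypothesis $\bigl[g\,f'\bigr]_0^1=0$ holds, yet $\int_0^1 f''g\,dx=\tfrac13\neq-\tfrac23=\int_0^1 f\,g''\,dx$. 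The paper's proof glosses over precisely this point; your parenthetical observation---that in the governing application $g(x)=x^{n}(1-x)^{n}$ has $g^{(j)}$ vanishing at both endpoints for every $j\le n-1$, so all boundary terms die automatically---is the correct way to repair the statement and the argument.
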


\begin{proof} \, Analogously to the proof of Theorem~\ref{teo:intPnx}, it suffices to make the first integration by parts. Being $\,du = \frac{dg}{dx} ~ dx\,$ and $\,v = \int{\frac{df}{dx} ~ dx} = f(x)$, one has
\begin{eqnarray}
J_1 := \int_a^b{\frac{df}{dx} \: g(x) ~ dx} = \int_a^b{f'(x) \: g(x) ~ dx} = \left[ f(x) \, g(x) \right]_a^b -\int_a^b{f(x) \: g'(x) ~ dx} \nonumber \\
= 0 -\int_a^b{f(x) \: g'(x) ~ dx} \, .
\end{eqnarray}
This can be viewed as the result for the case $n=1$. Then, for any integer $n>1$ one finds
\begin{eqnarray}
J_n := \int_a^b{\frac{d^{\,n} f}{dx^n} \; g(x) ~ dx} = \int_a^b{f^{(n)}(x) \; g(x) ~ dx} \nonumber \\
= \left[ g(x) \; f^{(n-1)}(x) \right]_a^b -\int_a^b{f^{(n-1)}(x) \; g'(x) ~ dx} \nonumber \\
= 0 -\int_a^b{f^{(n-1)}(x) \;\, \frac{dg}{dx} ~ dx} \, .
\end{eqnarray}
Therefore, the overall result of each integration by parts is the transference of a derivative $\,d/dx\,$ from $\,f(x)\,$ to the function $g(x)$ and a change of sign. Each further integration by parts will have the same effect.
\end{proof}

For those readers more interested in Gauss-Legendre quadrature, any integral over $[a, b]$ can be easily changed into an integral over $[-1,1]$ in view to apply the original quadrature rule. This change of interval is done by taking into account the same simple substitution given in Eq.~\eqref{eq:substFim}. This yields
\begin{eqnarray}
\int_a^b{f(x) ~ dx} = \int_{-1}^{\,1}{\,f\!\left(\frac{\,b -a}{2} \: t +\frac{\,a +b}{2} \right) \: \frac{dx}{dt} ~ dt} \nonumber \\
= \frac{\,b -a\,}{2} \: \int_{-1}^{\,1}{\,f\!\left(\frac{\,b-a}{2} \: t +\frac{\,a+b}{2} \right) \, dt} \, .
\end{eqnarray}
Applying the $n$-point Gaussian quadrature to this new integral then results in the approximation
\begin{equation}
\int_a^b{f(x) ~ dx} \approx \frac{\,b-a}{2} \: \sum_{i=1}^n{w_i \; f\!\left(\frac{b-a}{2} \: \xi_{\,i} +\frac{a+b}{2}\right)} \, .
\end{equation}

\section*{Acknowledgement}

The author thanks his more involved students Jo\~{a}o Bitencourt, Matheus Diniz, and Vin\'{i}cius Moraes for the assistance in the organization of his hand-written lecture notes.

\newpage

\end{document}